\renewcommand{\paragraph}[1]{\smallskip\noindent\textbf{\emph{#1}.}}
\newcommand{\R}{\mathbb{R}}
\newcommand{\N}{\mathbb{N}}
\newcommand{\V}{\mathcal{V}}
\renewcommand{\S}{\mathscr{S}}
\newcommand{\B}{\mathcal{B}}
\newcommand{\T}{\mathcal{T}}
\newcommand{\term}{\mathit{term}}
\newcommand{\init}{\mathit{init}}
\newcommand{\dist}{\mathit{dist}}
\newcommand{\at}{\mathit{at}}
\newcommand{\Out}{\mathit{Out}}
\newcommand{\xmark}{\ding{55}}%
\newcommand{\AP}{\texttt{AP}}
\newcommand{\F}{\texttt{F}}
\newcommand{\G}{\texttt{G}}
\newcommand{\U}{\texttt{U}}
\newcommand{\X}{\texttt{X}}
\newcommand{\formula}{\varphi}
\newcommand{\run}{\pi}
\newcommand{\scheduler}{\sigma}
\newcommand{\states}{\S}
\newcommand{\variables}{\V}
\newcommand{\locations}{L}
\newcommand{\initloc}{l_\init}
\newcommand{\precondition}{\theta_\init}
\newcommand{\guard}{G}
\newcommand{\update}{U}
\newcommand{\initstates}{\states_\init}
\newcommand{\terminal}{l_t}
\newcommand{\powerset}[1]{2^{#1}}
\newcommand{\nbastates}{Q}
\newcommand{\nbaalpha}{A}
\newcommand{\nbatrans}{\delta}
\newcommand{\nbainit}{q_0}
\newcommand{\nbaaccept}{F}
\newcommand{\invariant}{\theta}
\newcommand{\outtrans}{\Out}
\newcommand{\tstrans}{\mapsto}
\definecolor{codegreen}{rgb}{0,0.6,0}
\definecolor{codegray}{rgb}{0.5,0.5,0.5}
\definecolor{codepurple}{rgb}{0.58,0,0.82}
\lstdefinestyle{mystyle}
{
	language = python,
	basicstyle = {\ttfamily\footnotesize},
	stringstyle = {\color{string-color}},
	keywordstyle = {\bfseries\color{codegreen}},
	keywordstyle = [2]{\bfseries\color{codegreen}},
	keywordstyle = [3]{\bfseries\color{codepurple}},
	keywordstyle = [4]{\bfseries\color{teal}},
	otherkeywords = {;,<<,>>,++},
	morekeywords = [2]{do, then, fi, done},
	morekeywords = [3]{true, false, skip},
	morekeywords = [4]{Precondition},
}
\begin{document}
	
	%%
	%% The "title" command has an optional parameter,
	%% allowing the author to define a "short title" to be used in page headers.
	\title{Sound and Complete Witnesses for Template-based Verification of LTL Properties on Polynomial Programs}
	
\author{Krishnendu Chatterjee\inst{1}\orcidID{0000-0002-4561-241X} 
\and
Amir Kafshdar Goharshady\inst{2}\orcidID{0000-0002-2643-120X} 
\and
Ehsan Kafshdar Goharshady\inst{1}\orcidID{0000-0002-8595-0587} 
\and
Mehrdad Karrabi(\Letter)\inst{1}\orcidID{0009-0007-5253-9170} 
\and
\DJ{}or\dj{}e \v{Z}ikeli\'c\inst{3}\orcidID{0000-0002-4681-1699}\thanks{Part of the work done while the author was at the Institute of Science and Technology Austria (ISTA).}
}
\authorrunning{K. Chatterjee et al.}

\titlerunning{Sound and Complete Witnesses for LTL Properties on Polynomial Programs}

\institute{Institute of Science and Technology Austria (ISTA), Klosterneuburg, Austria \email{\{krishnendu.chatterjee, ehsan.goharshady, mehrdad.karrabi\}@ist.ac.at} \and
The Hong Kong University of Science and Technology (HKUST), Hong Kong\\ \email{goharshady@cse.ust.hk}
\and
Singapore Management University, Singapore\\
\email{dzikelic@smu.edu.sg}}

\authorrunning{K.~Chatterjee, A.~K.~Goharshady, E.~K.~Goharshady, M.~Karrabi, \DJ.~\v{Z}ikeli\'c}

		\maketitle
	\begin{abstract}
		We study the classical problem of verifying programs with
		respect to formal specifications given in the linear temporal logic (LTL).
		We first present novel sound and complete witnesses for LTL verification
		over imperative programs. Our witnesses are applicable to both verification
		(proving) and refutation (finding bugs) settings. We then consider LTL
		formulas in which atomic propositions can be polynomial constraints and
		turn our focus to polynomial arithmetic programs, i.e. programs in which
		every assignment and guard consists only of polynomial expressions. For
		this setting, we provide an efficient algorithm to automatically synthesize
		such LTL witnesses. Our synthesis procedure is both sound and semi-complete. 
        Finally, we present experimental results demonstrating the
		effectiveness of our approach and that it can handle programs which were
		beyond the reach of previous state-of-the-art tools.
	\end{abstract}

%	\keywords{Linear Temporal Logic, Polynomial Programs, Static Analysis}

	%more compact lists
	% \setdefaultleftmargin{1em}{}{}{}{.5em}{.5em}

\section{Introduction} 

\paragraph{Linear-time Temporal Logic} The Linear-time Temporal Logic (LTL)~\cite{ltl} is one of the most classical and well-studied frameworks for formal specification, model checking and program verification. In LTL, we consider a set $\AP$ of atomic propositions and an infinite trace which tells us which propositions in $\AP$ hold at any given time. LTL formulas are then able to not only express propositional logical operations, but also modalities referring to the future. For example, $\X\ p$ requires that $p$ holds in the next timeslot, whereas $\F\ q$ means $q$ should hold at some time in the future. This allows LTL to express common verification tasks such as termination, liveness, fairness and safety. 

%%% Commenting out due to space limitations, also may be redundant for a conference such as FM %%%
%Since its introduction in 1977, LTL has been widely studied by the verification, programming languages and model checking communities. In its most classical format, the traces are runs of a given Kripke structure $K,$ which is a simplified model of a program, and the model checking problem is to decide whether all such runs satisfy a given LTL formula $\formula.$ The satisfiability problem asks whether such a $K$ exists, given a formula $\formula$, and the synthesis problem asks for a concrete $K.$ Both model checking and satisfiability are known to be PSPACE-complete for LTL, whereas synthesis is 2EXPTIME-complete~\cite{ltlsynth,DBLP:conf/banff/Vardi95,DBLP:conf/tacas/EsparzaKRS17}. There is also a natural and elegant correspondence between LTL formulas and Non-deterministic B\"uchi automata (NBW)~\cite{DBLP:journals/iandc/VardiW94,DBLP:conf/focs/Wolper81}, as well as a wide variety of approaches and tools that perform the LTL-to-NBW translation~\cite{DBLP:journals/entcs/Thirioux02,DBLP:conf/cav/SomenziB00,DBLP:journals/fcsc/ShanDQ15,DBLP:conf/cav/GastinO01,DBLP:conf/forte/GiannakopoulouL02,DBLP:conf/depcos/CichonCJ09,DBLP:conf/pact/ShoshminaB11,DBLP:conf/tacas/BabiakKRS12,DBLP:conf/icfem/MochizukiSHY14,DBLP:conf/atva/SickertK16,DBLP:conf/fm/Jantsch0B019}. 

\paragraph{Witnesses} Given a specification $\formula$ and a program $P,$ a \emph{witness} is a mathematical object whose existence proves that the specification $\formula$ is satisfied by $P$. We say that a witness family is \emph{sound and complete} when for every program $P$ and specification $\formula,$ we have $P \models \formula$ if and only if there is a witness in the family that certifies it. 
%MK: For example, ranking functions~\cite{floyd1993assigning} are well-known sound and complete witnesses for termination.
Witnesses are especially useful in dealing with undecidable problems in verification, which includes all non-trivial semantic properties~\cite{rice1953classes}. This is because although the general case of the problem is undecidable, having a sound and complete notion of a witness can lead to algorithms that check for the existence of witnesses of a special form. For example, while termination is undecidable~\cite{turing1936computable}, and hence so is the equivalent problem of deciding the existence of a ranking function, there are nevertheless sound and complete algorithms for synthesis of \emph{linear} ranking functions~\cite{podelski2004complete}. Similarly, while reachability (safety violation) is undecidable, it has sound and complete witnesses that can be automatically synthesized in linear and polynomial forms~\cite{reach20}. Our work subsumes both~\cite{podelski2004complete} and~\cite{reach20} and provides sound and complete witnesses for general LTL formulas.

\paragraph{Polynomial Programs} In this work, we mainly focus on imperative programs with polynomial arithmetic. More specifically, our programs have real variables and the right-hand-side of every assignment is a polynomial expression with respect to program variables. Similarly, the guard of every loop or branch is also a boolean combination of polynomial inequalities over the program variables. %See Figure~\ref{example_ts} as an example.

\paragraph{Our Contributions} In this work, our contributions are threefold: 
\begin{compactitem}
	\item On the theoretical side, by exploiting the connections to B\"uchi automata, we propose a novel family of sound and complete witnesses for general LTL formulas. This extends and unifies the known concepts of ranking functions~\cite{floyd1993assigning}, inductive reachability witnesses~\cite{reach20} and inductive invariants~\cite{colon2003linear}, which are sound and complete witnesses for termination, reachability and safety, respectively. Our theoretical result is not limited to polynomial programs. %and is applicable to any program.
	
	\item On the algorithmic side, we consider polynomial programs and present a sound and semi-complete template-based algorithm to synthesize polynomial LTL witnesses. This algorithm is a generalization of the template-based approaches in~\cite{podelski2004complete,reach20,colon2003linear} which considered termination, reachability and safety. To the best of our knowledge, this is the most general model checking problem over polynomial programs to be handled by template-based approaches~to~date.
	
	\item Finally, on the experimental side, we provide an implementation of our approach and comparisons with state-of-the-art LTL model checking tools. Our experiments show that our approach is applicable in practice and can handle many instances that were beyond the reach of previous methods. Thus, our completeness result pays off in practice and enables us to solve new instances.
\end{compactitem}

\paragraph{Motivation for Polynomial Programs} There are several reasons why we consider polynomial programs:
\begin{compactitem}
	\item Many real-world families of programs, such as, programs for cyber-physical systems and smart contracts, can be modeled in this framework~\cite{DBLP:conf/iccps/GurrietSRCFA18,fulton2018verifiably,oopsla}.
	\item They are one of the most general families for which finding polynomial witnesses for reachability and safety are known to be decidable~\cite{DBLP:conf/popl/SankaranarayananSM04,reach20,invariant20}. Hence, they provide a desirable tradeoff between decidability and generality.
	\item  Using abstract interpretation, non-polynomial behavior in a program can be removed or replaced by non-determinism. Moreover, one can approximate any continuous function up to any desired level of accuracy by a polynomial. This is due to the Stone--Weierstrass theorem~\cite{de1959stone}. Thus, analysis of polynomial programs can potentially be applied to many non-polynomial programs via abstract interpretation or numerical approximation of the program's behavior. 
	\item Previous works have studied (a)~linear/affine programs with termination, safety, and reachability specifications~\cite{podelski2004complete,colon2003linear,DBLP:conf/sas/SankaranarayananSM04}, and (b)~polynomial programs with termination, safety and reachability properties~\cite{DBLP:conf/popl/SankaranarayananSM04,invariant20,reach20,DBLP:conf/cav/ChatterjeeFG16}. Since LTL subsumes all these specifications, polynomial program analysis with LTL provides a unifying and general framework for all these previous works. 
	%\item LTL verification over polynomial programs is considered in~\cite{unno2021csp,unno2023muval} which provide a sound and relatively-complete algorithm. They have a general framework with integer variables and recursive functions. Thus, they have no guaranteed complexity bounds.
\end{compactitem}

\paragraph{Related Works on Linear Programs} There are many approaches focusing on linear witness synthesis for important special cases of LTL formulas. For example,~\cite{podelski2004complete,DBLP:conf/atva/HeizmannHLP13} consider the problem of synthesizing linear ranking functions (termination witnesses) over linear arithmetic programs. The works~\cite{colon2003linear,DBLP:conf/sas/SankaranarayananSM04} synthesize linear inductive invariants (safety witnesses), while~\cite{DBLP:conf/tacas/FunkeJB20} considers probabilistic reachability witnesses.  The work~\cite{DBLP:conf/pldi/GulwaniSV08} handles a larger set of verification tasks and richer settings, such as context-sensitive interprocedural program analysis. All these works rely on the well-known Farkas lemma~\cite{farkas} and can handle programs with linear/affine arithmetic and synthesize linear/affine witnesses. In comparison, our approach is (i)~applicable to general LTL formulas and not limited to a specific formula such as termination or safety, and (ii)~able to synthesize \emph{polynomial} witnesses for \emph{polynomial} programs with soundness and completeness guarantees. Thus, our setting is more general in terms of (a)~formulas, (b)~witnesses, and (c)~programs that can be supported.

\paragraph{Related Works on Polynomial Programs} Similar to the linear case, there is a rich literature on synthesis of polynomial witnesses over polynomial programs. However, these works again focus on specific special formulas only and are not applicable to general LTL. For example,~\cite{DBLP:conf/concur/NeumannO020,DBLP:conf/fm/MoosbruggerBKK21,DBLP:conf/cav/ChatterjeeFG16,DBLP:journals/jossac/ShenWYZ13,nonterm21} consider termination analysis,~\cite{invariant20} extends the invariant generation (safety witness synthesis) algorithm of~\cite{colon2003linear} to the polynomial case and~\cite{DBLP:conf/atva/FengZJZX17} further adds support for probabilistic programs. The works~\cite{DBLP:conf/cdc/Clark21,DBLP:journals/tcad/ZhangYLZCL18,DBLP:conf/cav/WangCXZK21} consider alternative types of witnesses for safety (barriers) and obtain similarly successful synthesis algorithms. Finally,~\cite{DBLP:journals/toplas/TakisakaOUH21,reach20} synthesize reachability witnesses. Since we can handle any arbitrary LTL formula, our approach can be seen as an extension and unification of all these works. Indeed, our synthesis algorithm directly builds upon and extends~\cite{reach20}.

In both cases above, some of the previous works are incomparable to ours since they consider probabilistic programs, whereas our setting has only non-probabilistic polynomial programs. Note that we do allow non-determinism.

\paragraph{Related Works on LTL Model Checking} There are thousands of works on LTL model checking and there is no way we can do justice to all. We refer to~\cite{henzinger2018handbook,strejcek2004linear} for an excellent treatment of the finite-state cases. Some works that provide LTL model checking over infinite-state systems/programs are as follows:
\begin{compactitem}
	\item A prominent technique in this area is predicate abstraction~\cite{DBLP:conf/cav/GrafS97,DBLP:conf/cav/DanielCGTM16,PodelskiR05}, which uses a finite set of abstract states defined by an equivalence relation based on a finite set of predicates to soundly, but not completely, reduce the problem to the finite-state case. 
	\item \cite{F3paper} uses a compositional approach to falsify LTL formulas and find an indirect description of a path that violates the specification.
	\item There are several symbolic approaches, including~\cite{DBLP:conf/tacas/CookKP15} which is focused on fairness and~\cite{DBLP:conf/memics/BauchHB14} which is applicable to LLVM. Another work in this category is \cite{UltimatePaper}, whose approach is to repeatedly rule out infeasible finite prefixes in order to find a run of the program that satisfies/violates the desired LTL formula.
	The work \cite{DBLP:conf/popl/CookK11} uses CTL-based approaches that might report false counter-examples when applied to LTL. It then identifies and removes such spurious counterexamples using symbolic determinization.
	
	\item The work~\cite{FarzanKP16} presents a framework for proving liveness properties in multi-threaded programs by using well-founded proof spaces.
	
	\item The recent work~\cite{DBLP:journals/fmsd/PadonHMPSS21} uses temporal prophecies, inspired by classical prophecy variables, to provide significantly more precise reductions from general temporal verification to the special case of safety. 
	
	\item There are many tools for LTL-based program analysis. For example, T2~\cite{t2} is able to verify a large family of liveness and safety properties, nuXmv~\cite{nuxmv} is a symbolic model checker with support for LTL, F3~\cite{F3paper} proves fairness in infinite-state transition systems, and Ultimate LTLAutomizer~\cite{UltimatePaper} is a general-purpose tool for verification of LTL specifications over a wide family of programs with support for various types of variables. 
	
	\item Finally, we compare against the most recent related work~\cite{unno2023muval}. This work provides relative-completeness guarantees for general programs with LTL specifications. Since it considers integer programs with recursive functions, there is no complexity guarantee provided. The earlier work~\cite{unno2021csp} provides several special cases where termination is guaranteed. However, no runtime bounds are established. In contrast, our approach has both termination guarantees and sub-exponential time complexity for fixed degree. %We provide an experimental comparison in Section~\ref{sec:exper}.
\end{compactitem}
As shown by our experimental results in Section~\ref{sec:exper}, our completeness results enable our tool to handle instances that other approaches could not. On the other hand, our method is limited to polynomial programs and witnesses. Thus, there are also cases in which our approach fails but some of the previous tools succeed, e.g.~when the underlying program requires a non-polynomial witness. In particular, Ultimate LTLAutomizer~\cite{UltimatePaper} is able to handle non-polynomial programs and witnesses, too.

	\section{Transition Systems, LTL and B\"uchi Automata}
\label{prelim}

%We use $\N$ and $\R$ to denote the sets of natural and real numbers, respectively. We define $\N_0 := \N \cup \{0\}$, $[n] := \{0,1,2,\dots, n-1\}$ and $\Lambda$ as the empty string.
For a vector $e \in \R^n$ , we use $e_i$ to denote the $i$-th component of $e$. Given a finite set $\V$ of real-valued variables, a variable valuation $e\in\R^{|\V|}$ and a boolean predicate $\formula$ over $\V$, we write $e\models \formula$ when $\formula$ evaluates to true upon substituting variables by the values given in $e$.

%\subsection{Model}

We consider imperative numerical programs with real-valued variables, containing standard programming constructs such as assignments, branching and loops. In addition, our programs can have finite non-determinism. We denote non-deterministic branching in our syntax by  \textbf{if $\ast$ then}. See Figure~\ref{example_ts} for an example. We use transition systems to formally model programs.

\paragraph{Transition systems} 
An infinite-state \emph{transition system} is a tuple $\T = (\variables, \locations,  \initloc, \precondition, \tstrans)$, where:
	\begin{compactitem}
	\item $\variables = \{x_0, \dots, x_{n-1}\}$ is a finite set of real-valued {\em program variables}.
	\item $\locations$ is a finite set of {\em locations} with $\initloc \in L$ the {\em initial location}.
	\item $\precondition \subseteq \mathbb{R}^n$ is a set of {\em initial variable valuations}.
	\item $\tstrans$ is a finite set of {\em transitions}. Each transition $\tau \in \,\tstrans$ is of the form $\tau = (l, l', \guard_\tau, \update_\tau)$, where $l$ is the source location, $l'$ is the target location, $\guard_\tau$ is the guard of the transition, which is a boolean predicate over $\variables,$ and $\update_\tau: \R^n \to \R^n$ is the update function of the transition. 
	\end{compactitem}
Translating programs into transition systems is a standard process. In what follows, we assume we are given a transition system $\T = (\variables, \locations,  \initloc, \precondition, \tstrans)$ of the program that we wish to analyze. An example is shown in Figure~\ref{example_ts}.

\paragraph{States and Runs} A {\em state} in $\T$ is a pair $(l,e)$ with $l\in \locations$ and $e\in\R^n$. A state $(l,e)$ is said to be {\em initial} if $l=\initloc$ and $e\in\precondition$. We use $\states$ and $\initstates$ to denote the sets of all states and initial states. We assume the existence of a special {\em terminal location} $\terminal$ with a single outgoing transition which is a self-loop $(\terminal,\terminal,\textrm{true},\mathit{Id})$ with $\mathit{Id}(e) = e$ for each $e\in\R^n$.
A state $(l',e')$ is a {\em successor} of $(l,e),$ denoted as $(l,e) \tstrans (l',e'),$ if there exists a transition $\tau = (l, l', \guard_\tau, \update_\tau) \in \tstrans$ such that $e \models \guard_\tau$ and $e' =  \update_\tau(e)$. We assume each state has at least one successor so that all runs are infinite and LTL semantics are well defined. This is without loss of generality, since we can introduce transitions to the terminal location. %We impose this assumption to ensure all runs of the program are infinite, since LTL semantics are defined on infinite traces. 
A {\em run} in $\T$ is an infinite sequence of successor states starting in $\initstates$. %More precisely, $\run\colon \N_0 \to \states$ is a run of $\T$ if $\run(0) \in \initstates$ and $\run(i-1) \tstrans \run(i)$ for $i \in \N$. %Every finite prefix of a run is called a {\em trajectory}. We denote the trajectory containing the first $k$ states in $\run$ by $\run^k$ and write $\run^{k+}$ for the remaining suffix of the run $\run$. 

\begin{figure}[t]
	\centering
	\begin{subfigure}{0.3\textwidth}
		\begin{lstlisting}[frame=none,numbers=none]
Precondition $(l_{init})$: $x_0 \geq 0$
$l_1$:  while $x_0 \geq 0$ do
     if $\ast$ then
$l_2$:      $x_0 = x_0^2 + 1$
     else 
$l_3$:      $x_0 = x_0^2 - 1$
$l_t$:
		\end{lstlisting}
	\end{subfigure}%
\hfill
	\begin{subfigure}{0.55\textwidth}
%	\begin{tikzpicture}[ scale=.7, ->]
%	\node[draw,circle, minimum size=1cm, thick] (init) at (2,9) {$l_\init$};
%	\node[] (test) at (5,9) {$\theta_{\init} \equiv x_0 \geq 0$};
%	\node[draw,circle, minimum size=1cm, thick] (1) at (2,6) {$l_1$};
%	\node[draw,circle, minimum size=1cm, thick] (2) at (-2,6) {$l_2$};
%	\node[draw,circle, minimum size=1cm, thick] (3) at (6,6) {$l_3$};
%	\node[draw,circle, minimum size=1cm, thick] (term) at (2,3) {$l_t$};
%	\draw[thick] (init) to (1);
%	
%	\draw[thick, bend left = 15] (1) to node[below]{$x_0\geq0$} (2);
%	\draw[thick, bend left = 15] (2) to node[above]{$x_0'=x_0^2+1$} (1);
%	
%	\draw[thick, bend right = 15] (1) to node[below]{$x_0\geq0$} (3);
%	\draw[thick, bend right = 15] (3) to node[above]{$x_0'=x_0^2-1$} (1);
%	
%	\draw[thick] (1) to node[right]{$x_0<0$} (term);
%	\draw[loop below, thick] (term) to (term);
%\end{tikzpicture}
\includegraphics[scale=0.60]{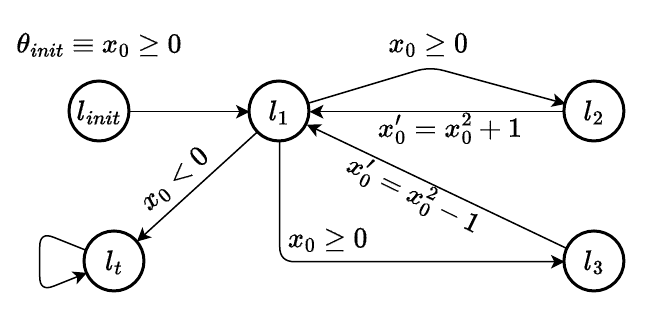}
	\end{subfigure}
	% \vspace{-1em}
	\caption{An example program (left) and its transition system (right). Note that there is non-determinism at $l_1.$}
	% \vspace{-1em}
	\label{example_ts}
\end{figure}

\paragraph{Linear-time Temporal Logic (LTL)}
	Let $\AP$ be a finite set of atomic propositions. LTL formulas are inductively defined as follows:
	\begin{compactitem}
		\item If $p \in \AP$, then $p$ is an LTL formula.
		\item If $\formula$ and $\psi$ are LTL formulas, then $\neg \formula$, $\formula \vee \psi$,  $\formula \wedge \psi$, $\X\ \formula$, $\G\ \formula$, $\F\ \formula$  and $\formula\ \U\ \psi$ are all LTL formulas.
	\end{compactitem}
	$\neg, \vee$ and $\wedge$ are the propositional negation, disjunction and conjunction while $\X,\G,\F$ and $\U$ are the \textit{next, globally, finally} and \textit{until} temporal operators. 

\paragraph{Atomic Propositions} To use LTL over the transition system $\T$, we first need to specify a finite set of atomic propositions $\AP$. In this work, we let the set $\AP$ consist of (i)~finitely many constraints of the form $\textrm{exp}(\mathbf{x})\geq 0$ where $\textrm{exp}\colon\V\rightarrow\R$ is an arithmetic expression over $\V$, and (ii)~an atomic proposition $\at(l)$ for each location $l$ in $\T$. 
% Intuitively, our atomic propositions can either (i)~check whether an arithmetic inequality holds over the program variables, or (ii)~check if the program/transition system is at a particular location $l.$ 
Note that unlike classical LTL settings, our atomic propositions are not necessarily independent. For example, if we have $p_1 := x \geq 0$ and $p_2 := x+1 \geq 0,$ it is impossible to have $p_1 \wedge \neg p_2$ at any point in time.

The semantics of LTL is standard, refer to Appendix \ref{app:sem} for details.

\paragraph{Program Analysis with LTL Specifications}
We now define the LTL program analysis problems that we consider in this work. Given a transition system $\T$ and an LTL formula $\formula$, we are interested in two problems:
% schedulers $\scheduler$ for which the induced run $\run_\scheduler$ satisfies $\formula$. In particular, we consider the following two decision problems:
\begin{compactenum}
	\item {\em LTL Verification of Programs (LTL-VP).} Given a transition system $\T$ and an LTL formula $\formula$ in $\T$, prove that \emph{all possible runs} of $\T$ satisfy $\formula$.
	\item {\em LTL Refutation of Programs (LTL-RP).} Given a transition system $\T$ and an LTL formula $\formula$ in $\T$, prove that there \emph{exists a run} that violates $\formula$, or equivalently, satisfies $\neg \formula$.
\end{compactenum}

\paragraph{Remark} \textit{LTL Verification} asks about correctness of the program while \textit{LTL Refutation} addresses the problem of finding bugs. Both problems have been widely studied in the literature \cite{BaresiKR15,UltimatePaper,unno2023muval}. Moreover, a witness for the refutation problem can be used in counterexample-guided techniques such as CEGAR \cite{ClarkeGJLV00}.

\paragraph{Example}
	Consider the transition system in Figure \ref{example_ts} and the LTL formula $\formula =\neg [\G (\at(l_3) \Rightarrow \F \at(l_2))]$. The run that starts at $(\initloc, 1)$ and chooses $l_2$ if $x_0 = 0$ and $l_3$ whenever $x_0 = 1$, does not satisfy $\formula$. Therefore, in this case, the answer to the LTL-RP problem is positive. Additionally, deciding termination of a program with terminal location $\terminal$ is equivalent to the LTL-VP problem of $[\F\ \at(\terminal)]$ on the same program.

\begin{paragraph}{Program Analysis with Büchi Specifications}
A B\"uchi specification is a subset $\B \subseteq \states$ of states. A run $\run$ is {\em $\B-$Büchi} if it visits $\B$ infinitely many times, i.e.~if $\{ i ~|~ \run(i) \in \B \}$ is infinite. Similar to LTL, Büchi specifications give rise to two main decision problems as follows:
	\begin{compactenum}
		\item {\em Universal Büchi Program Analysis (UB-PA).} Given a transition system $\T$ and a Büchi specification $\B$ on $\T$, prove that \emph{all possible runs} of $\T$ are {\em $\B-$Büchi}.
		\item {\em Existential Büchi Program Analysis (EB-PA).} Given a transition system $\T$ and a Büchi specification $\B$ on $\T$, prove the \emph{existence} of a run that is {\em $\B-$Büchi}.
	\end{compactenum}
\end{paragraph} 

%\EG{remark on why existential/refutation PA is important. (Bug finding etc.)}

\paragraph{B\"uchi Automata~\cite{katoen08,buchi1966symposium}} A {\em non-deterministic Büchi automaton (NBW)} is a tuple $N=(\nbastates,\nbaalpha,\nbatrans, \nbainit, \nbaaccept)$, where 
$\nbastates$ is a finite set of states, $\nbaalpha$ is a finite alphabet, $\nbatrans\colon \nbastates \times \nbaalpha \to \powerset{\nbastates}$ is a  transition relation, $\nbainit$ is the initial state, and $\nbaaccept \subseteq \nbastates$ is the set of accepting states. 
An infinite word $a_0,a_1,\dots$ of letters in the alphabet $\nbaalpha$ is accepted by $N$ if it gives rise to at least one accepting run in $N$, i.e.~if there exists a run $q_0,q_1,\dots$ such that $q_{i+1} \in \nbatrans(q_i,a_i)$ for each $i$ and $\nbaaccept$ is visited infinitely many times. It is a classical result that for every LTL formula $\formula$ defined over atomic predicates $\AP$ there exists a non-deterministic Büchi automaton $N$ with alphabet $\powerset{AP}$ which accepts exactly those traces that satisfy $\formula$~\cite{henzinger2018handbook}.

% \paragraph{From LTL Formulas to Büchi Automata}
% 	It is a classical result in model checking that for every LTL formula $\formula$ defined over atomic predicates $\AP$ there exists a non-deterministic Büchi automaton $N$ with alphabet $\powerset{AP}$ which accepts exactly those traces that satisfy $\formula$~\cite{henzinger2018handbook}. Here, $\powerset{\AP}$ is the set of subsets of $\AP.$ 

Let $\T=(\variables, \locations,  \initloc, \precondition, \tstrans)$ be a transition system and $N=(\nbastates,\powerset{AP},\nbatrans, \nbainit, \nbaaccept)$ be an NBW. In order to analyse $\T$ with respect to $N$, we utilize the Cartesian product $\T \times N$ and the B\"uchi specification $\B^\T_N=\locations \times \nbaaccept \times \R^n$. The state space of $\T \times N$ is exactly the Cartesian product of the state spaces of $\T$ and $N$. Moreover, for $l,l' \in \locations$ and $q,q' \in \nbastates$, there is a transition from $(l,q)$ to $(l',q')$ if there is a transition in $\T$ from $l$ to $l'$ and a transition in $N$ from $q$ to $q'$. The formal definition of the product is available in Appendix \ref{app:LTL2BA}. See Figure~\ref{fig:2} for an example.

\begin{lemma}[From LTL to Büchi Specifications, Proof in Appendix~\ref{proof:lm1}]
	\label{reduction}
	Let $\T$ be a transition system, $\formula$ an LTL formula for $\T$ and $N$ an NBW that accepts the same language as $\formula$. 
 \begin{compactitem}
     \item The LTL-RP problem of $\T$ and $\neg \formula$ is equivalent to the EB-PA problem of $\T \times N$ and $\B^\T_N$~\cite{UltimatePaper}.
     \item If $N$ is deterministic, then the LTL-VP problem of $\T$ and $\formula$ is equivalent to the UB-PA problem of $\T \times N$ and $\B^\T_N$.
 \end{compactitem}
\end{lemma}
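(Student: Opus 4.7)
The plan is to set up the product construction $\T \times N$ so that a run of the product faithfully represents a pair consisting of a run $\run$ of $\T$ together with a compatible run of $N$ on the trace of $\run$. By the definition of $\T\times N$ (given in the appendix), a product transition $(l,q)\to(l',q')$ requires both a $\T$-transition $l\to l'$ whose guard is satisfied at the current valuation, and an NBW transition $q'\in\nbatrans(q,a)$ where $a\subseteq\AP$ is the set of atomic propositions holding at the current $\T$-state. Hence every product run projects to a run of $\T$ together with a run of $N$ on its trace, and conversely every such pair lifts to a product run. Moreover, $\B^\T_N=\locations\times\nbaaccept\times\R^n$ is visited infinitely often by a product run iff the corresponding NBW run is accepting.

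With this correspondence in hand, the first item reduces to unfolding definitions. A run $\run$ of $\T$ violates $\neg\formula$ iff its trace satisfies $\formula$ iff (by the LTL-to-NBW equivalence) its trace is accepted by $N$ iff there exists \emph{some} run of $N$ on this trace visiting $\nbaaccept$ infinitely often iff the resulting pair lifts to a product run that visits $\B^\T_N$ infinitely often. Therefore existence of a run of $\T$ violating $\neg\formula$ coincides with existence of a $\B^\T_N$-Büchi run of $\T \times N$, which is precisely EB-PA. This direction does not use determinism of $N$, since only existence of an accepting NBW run is needed.

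For the second item, the forward direction mirrors the argument above: assuming every run of $\T$ satisfies $\formula$, any product run projects to some $\run\models\formula$ whose NBW-component must therefore be accepting, so the product run is $\B^\T_N$-Büchi. The backward direction is where determinism is essential, and it will be the main obstacle of the proof: starting from an arbitrary run $\run$ of $\T$, we need to conclude $\run\models\formula$ from the fact that \emph{every} product run is $\B^\T_N$-Büchi. Because $N$ is deterministic, the trace of $\run$ induces a \emph{unique} NBW run, and hence a unique product lift of $\run$; by the universal Büchi assumption, this lift visits $\B^\T_N$ infinitely often, so the NBW run is accepting and $\run\models\formula$. Without determinism this step could fail, since a trace satisfying $\formula$ might still admit some non-accepting NBW run whose product lift would refute UB-PA. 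Once the projection-lift correspondence is written down carefully, the remaining bookkeeping is routine.
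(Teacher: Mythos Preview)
Your overall plan---set up a correspondence between product runs and pairs consisting of a $\T$-run together with a compatible $N$-run on its trace, then unfold definitions---is exactly what the paper does. The treatment of Part~1 is fine. However, in Part~2 you have swapped which direction actually needs determinism.

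In your forward direction (LTL-VP $\Rightarrow$ UB-PA) you write: ``any product run projects to some $\run\models\formula$ whose NBW-component must therefore be accepting.'' This inference is the step that genuinely requires determinism. Knowing $\run\models\formula$ only tells you that \emph{some} run of $N$ on the trace of $\run$ is accepting; without determinism, the particular NBW-component of the given product run may well be a different, non-accepting run of $N$ on the same trace. Indeed, your own counterexample at the end (``a trace satisfying $\formula$ might still admit some non-accepting NBW run whose product lift would refute UB-PA'') is precisely a counterexample to this forward implication, not to the backward one. The paper invokes determinism exactly here: since $D$ is deterministic, the NBW-component is the \emph{unique} run of $D$ on the trace, and $\run\models\formula$ forces that unique run to be accepting.

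Conversely, your backward direction (UB-PA $\Rightarrow$ LTL-VP) does not hinge on uniqueness of the lift. Given a run $\run$ of $\T$, \emph{any} lift to a product run is $\B^\T_N$-B\"uchi by the UB-PA hypothesis, hence its NBW-component is accepting, hence $\run\models\formula$. What is actually needed here is merely that \emph{some} lift exists, i.e.\ that $N$ admits at least one run on every trace; determinism (with the usual totality convention) guarantees this, but uniqueness plays no role in the argument. The fix is simply to move the appeal to determinism from the backward to the forward direction; once you do that, your proof matches the paper's.
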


%\paragraph{Example} Figure \ref{fig:2} (left) shows an example Büchi automaton $N$ accepting the runs satisfying $[\G\ \F\ \at(l_2)]$ with $q_1$ as the only accepting node. Figure \ref{fig:2} (right) shows the product of the transition system in Figure \ref{example_ts} and $N$ with the Büchi specification $\B^\T_N$ shown in gray.

\input{example_mult}

\paragraph{Remark} Based on the lemma above, instead of designing witnesses for the LTL-RP problem, we only need to find sound and complete witnesses for EB-PA. Moreover, it is easy to see that LTL-VP is reducible to LTL-RP since all runs of $\T$ satisfy $\formula$ if and only if there is no run that satisfies $\neg \formula.$ So, finding sound and complete witnesses for EB-PA will theoretically solve both verification and refutation variants of LTL program analysis. 
Note that the second statement in Lemma~\ref{reduction} is more restrictive than the first one since it only applies to deterministic B\"uchi automata. Thus, if the LTL formula $\formula$ does not admit a deterministic B\"uchi automaton, the above sequence of reductions from LTL-VP to LTL-RP should be made and then the EB-PA witness should be used. However, if $\formula$ admits a DBW, then the reduction to UB-PA is preferable in practice. We will provide witness concepts for both EB-PA and UB-PA problems in the next section. 

	\section{Sound and Complete B-PA Witnesses}
\label{sec:witness}
Let $\T = (\variables, \locations, l_{\init}, \precondition, \tstrans)$ be a transition system and $\B \subseteq \states$ a set of states in $\T$. In this section, we introduce our sound and complete witnesses for the EB-PA and UB-PA problems.

\subsection{Sound and Complete Witnesses for Existential B-PA}
 Our witness concept for the EB-PA problem is a function that assigns a real value to each state in $\T$. The witness function is required to be non-negative in at least one initial state of $\T$, to preserve non-negativity in at least one successor state and to strictly decrease in value in at least one successor state whenever the current state is not contained in $\B$ and the value of the witness function in the current state is non-negative. Hence, starting in an initial state in which the witness function is non-negative, one can always select a successor state in which the witness function is non-negative and furthermore ensure that $\B$ is eventually reached due to the strict decrease condition, which will also be referred to as the \emph{B\"uchi-ranking condition}. Intuitively, an EBRF is a function that overestimates the distance to $\B$ and guarantees that $\B$ is reached along at least one program run, at every program state in which the value of the EBRF is non-negative. %\mk{Intuitively, an EBRF is a ranking function that overestimates the distance and guarantees reaching the next Büchi state.}

\begin{definition}[EBRF]
	\label{EBRF}
	Given two states $s_1, s_2 \in \states,$ a function $f\colon \states \to \R$ is said to Büchi-rank $(s_1,s_2)$ where $s_1 \tstrans s_2$, if it satisfies one of the following:
	\begin{compactitem}
		\item $s_1 \in \B \wedge \big[f(s_1) \geq 0 \Rightarrow f(s_2) \geq 0\big];$ or
		\item $s_1 \notin \B \wedge \big[f(s_1) \geq 0 \Rightarrow 0 \leq f(s_2) \leq f(s_1)-1\big].$
	\end{compactitem}
	$f$ is called a \textit{$\B$-Existential Büchi Ranking Function} ($\B$-EBRF) if it satisfies the following conditions:
	\begin{compactitem}
		\item $ \exists s_\init \in \states_\init \text{ where } f(s_{\init}) \geq 0$.
		\item For every $s_1 \in \states$, there exists $s_2 \in \states$ such that $s_1 \tstrans s_2$ and $(s_1,s_2)$ is Büchi-ranked by $f$.
	\end{compactitem}
\end{definition}

%The intuition behind $\B$-EBRFs is to have a function that is non-negative initially, decreases until $\B$ is reached, leaves $\B$ with a non-negative value then decreases again until $\B$ is reached, and so on, guaranteeing that $\B$ is visited infinitely often.

\paragraph{Example}
	The following is a $\{(l_1,q_1, \ast)\}$-EBRF for the transition system in Fig.~\ref{fig:2}: $f(l,x_0) = x_0+3$ if $l = (l_\init,q_0)$, $f(l,x_0) = x_0+2$ if $l = (l_1,q_0)$, $f(l,x_0) = x_0+1$ if $l = (l_2,q_0)$, $f(l,x_0) = 0$ if $l = (l_1,q_1)$ and $f(l,x_0) = 0$ otherwise.
 % $f(l,x_0) = x_0+3 \text{~if } l = (l_\init,q_0)$, $x_0+2 \text{~if }  l = (l_1,q_0)$, $x_0+1 \text{~if }  l = (l_2,q_0)$, $0  \text{~if }  l = (l_1,q_1)$, and $-1 \text{~otherwise}$.
 % \vspace{-2em}
    %\begin{multicols}{2}
     % \begin{equation*}
      %\begin{small}\begin{cases}
	%	x_0+3 &\text{~~~~if } l = (l_\init,q_0)\\
	%	x_0+2 &\text{~~~~if }  l = (l_1,q_0)\\
	%	x_0+1 &\text{~~~~if }  l = (l_2,q_0)
	%\end{cases}\end{small}
      %\end{equation*}\break
      %\begin{equation*}
      %\begin{small}\begin{cases}
%		0  &\text{~~~~if }  l = (l_1,q_1)\\
%		-1 &\text{~~~~otherwise}
%	\end{cases}.\end{small}
 %     \end{equation*}
  %  \end{multicols}
    
	% $$ \textstyle
	% f(l,x_0) = \begin{small}\begin{cases}
	% 	x_0+3 &\text{~~~~if } l = (l_\init,q_0)\\
	% 	x_0+2 &\text{~~~~if }  l = (l_1,q_0)\\
	% 	x_0+1 &\text{~~~~if }  l = (l_2,q_0)\\
	% 	0  &\text{~~~~if }  l = (l_1,q_1)\\
	% 	-1 &\text{~~~~otherwise}
	% \end{cases}.\end{small}
	% $$
	For example, the state $s_0 = ((l_1,q_0),1)$ has two successors in the transition system: $s_1 = ((l_2,q_0),1)$ and $s_2 = ((l_3,q_0),1)$. It is easy to see that $0 \leq f(s_1) \leq f(s_0)-1$ which shows that transition from $s_0$ to $s_1$ is Büchi-ranked by $f$.

The following theorem, proved in Appendix \ref{app:exist-proof}, establishes the soundness and completeness of EBRFs for the EB-PA problem, which is the main result of this section. Hence, since we showed in Lemma~\ref{reduction} that one can reduce the LTL-RP problem to EB-PA, as a corollary it also follows that EBRFs provide sound and complete certificates for LTL-RP.

\begin{theorem}[Soundness and Completeness of EBRFs for EB-PA]\label{thm:ebrf}
	There exists a $\B$-EBRF $f$ for $\T$ with Büchi specification $\B$ if and only if the answer to the EB-PA problem of $\T$ and $\B$ is positive.
\end{theorem}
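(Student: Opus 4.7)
I would prove the two directions separately: the soundness direction extracts a $\B$-Büchi run from a given EBRF, while the completeness direction constructs an EBRF from the assumption that some $\B$-Büchi run exists.

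For soundness, I assume a $\B$-EBRF $f$ is given and build a run greedily. By the initial-state clause, pick $s_0 \in \states_\init$ with $f(s_0) \geq 0$. Inductively, whenever $f(s_i) \geq 0$, the EBRF condition supplies a successor $s_{i+1}$ of $s_i$ that is Büchi-ranked by $f$, which together with $f(s_i) \geq 0$ forces $f(s_{i+1}) \geq 0$. The resulting run stays forever inside $\{f \geq 0\}$. To see that $\B$ is visited infinitely often, suppose for contradiction that no state lies in $\B$ from some index $I$ onward; then the second bullet of Definition~\ref{EBRF} forces $f(s_{i+1}) \leq f(s_i) - 1$ for every $i \geq I$, contradicting $f(s_i) \geq 0$ once $i - I$ exceeds $f(s_I)$.

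For completeness, let $W \subseteq \states$ be the set of states from which some $\B$-Büchi run originates; by hypothesis $W$ meets $\states_\init$, and $W$ is closed under ``suffix'' because tails of $\B$-Büchi runs are again $\B$-Büchi. For $s \in W$ I would define
\[
d(s) := \min\{k \geq 0 \mid \exists\, s = t_0 \tstrans t_1 \tstrans \cdots \tstrans t_k \text{ with } t_k \in \B \cap W\},
\]
which is well-defined because any $\B$-Büchi run from $s$ reaches $\B$ in finitely many steps and every state on it lies in $W$. I then set $f(s) := d(s)$ for $s \in W$ and $f(s) := -1$ otherwise. The initial condition is immediate from $W \cap \states_\init \neq \emptyset$. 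The ranking condition splits into three cases: if $s_1 \notin W$ then $f(s_1) < 0$ and any successor vacuously works; if $s_1 \in W \cap \B$ then $d(s_1) = 0$ and the successor of $s_1$ along any $\B$-Büchi run through $s_1$ lies in $W$, yielding $f(s_2) \geq 0$; if $s_1 \in W \setminus \B$ I take a shortest witness path $s_1 = t_0 \tstrans t_1 \tstrans \cdots \tstrans t_{d(s_1)}$ and set $s_2 = t_1$, which lies in $W$ and satisfies $d(s_2) \leq d(s_1) - 1$.

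The main subtlety is in the very definition of $d$: the endpoint $t_k$ must lie in $\B \cap W$ rather than just in $\B$, otherwise $f$ could drop to $0$ at a state in $\B$ from which no $\B$-Büchi continuation exists and the ranking condition would fail for its successors. Closure of $W$ under suffixes is what makes this refinement coherent, and once it is in place the case analysis above is routine.
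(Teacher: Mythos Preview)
Your soundness argument is essentially identical to the paper's. For completeness, however, you take a genuinely different route. The paper first establishes an auxiliary lemma: whenever a $\B$-B\"uchi scheduler exists, a \emph{memory-less} one exists (proved via a cycle-decomposition of the induced run). It then fixes a single memory-less $\B$-B\"uchi scheduler $\sigma$, lets $\pi_\sigma$ be its run, and sets $f$ to be the distance-to-next-$\B$ along $\pi_\sigma$ and $-1$ off $\pi_\sigma$; memory-lessness is exactly what makes this well-defined when $\pi_\sigma$ revisits a state. Your construction instead works with the entire winning region $W$ and the shortest distance to $\B\cap W$, bypassing the memory-less reduction altogether. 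This is more direct and self-contained, and the subtlety you correctly flag---that the target must be $\B\cap W$ rather than $\B$---is precisely the issue the paper avoids by restricting to a single periodic run instead of the whole region. The paper's detour buys an independently useful fact (memory-lessness suffices for existential B\"uchi), but at the cost of an extra lemma; your argument reaches the EBRF directly. One minor point: the paper reserves the word ``run'' for sequences starting in $\states_\init$, so your definition of $W$ should formally speak of infinite $\B$-B\"uchi successor sequences originating at $s$ rather than ``runs,'' but this is purely terminological and does not affect correctness.
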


%As stated in Corollary \ref{exists_reduction}, the ELTL-PA problem can be reduced to EB-PA showing that $EBRF$s are sound and complete witnesses for the ELTL-PA problems as well.
\begin{corollary}
	\label{LTL-RP-EBRF}
	The answer to the LTL-RP problem of $\T$ and $\formula$ is positive if and only if there exists a $\B_N^\T$-EBRF for $\T \times N,$ where $N$ is the NBW accepting~$\neg \formula$.
\end{corollary}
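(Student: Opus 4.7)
The plan is to derive this corollary by chaining the two results already available, with no independent model-checking argument needed. The starting observation is purely syntactic: the LTL-RP problem of $\T$ and $\formula$ asks whether some run of $\T$ violates $\formula$, which is tautologically the same as asking whether some run satisfies $\neg\formula$.

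First, I would apply Lemma \ref{reduction} with $\neg\formula$ substituted for the $\formula$ appearing in the lemma's statement. Under this renaming, the lemma requires $N$ to be an NBW accepting the same language as $\neg\formula$, which is exactly the $N$ named in the corollary, and its first bullet then asserts that the LTL-RP problem of $\T$ and $\neg(\neg\formula)=\formula$ is equivalent to the EB-PA problem of $\T\times N$ with B\"uchi specification $\B^\T_N$. No determinism assumption on $N$ is required, since we are invoking only the first (existential) bullet of Lemma \ref{reduction}.

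Second, I would invoke Theorem \ref{thm:ebrf} on the resulting EB-PA instance, taking the underlying transition system to be $\T\times N$ and the B\"uchi specification to be $\B^\T_N$. The theorem states that this EB-PA problem has a positive answer if and only if a $\B^\T_N$-EBRF for $\T\times N$ exists, which is precisely the right-hand side of the corollary. Composing the two biconditionals yields the claim. The only point requiring care, and certainly not a genuine obstacle, is the bookkeeping around the double negation: Lemma \ref{reduction} is phrased with $\neg\formula$ on the LTL-RP side and $N$ recognizing $\formula$, whereas the corollary reverses this, so the substitution $\formula \leftarrow \neg\formula$ must be made explicit to align the two statements.
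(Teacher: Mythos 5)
Your proposal is correct and matches the paper's intended derivation exactly: the paper offers no separate proof of Corollary~\ref{LTL-RP-EBRF} beyond noting that it follows by composing the first bullet of Lemma~\ref{reduction} (instantiated so that $N$ recognizes $\neg\formula$) with Theorem~\ref{thm:ebrf}. Your explicit handling of the $\formula \leftarrow \neg\formula$ substitution and the resulting double negation is the only point of care, and you treat it correctly.
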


% \mk{\paragraph{Remark} We note that LTL-VP is reducible to LTL-RP. 
% In other words, if we develop a sound and complete witness that decides EB-PA, 
% we can use it to solve LTL-VP too. However, as will be explained in section~\ref{algo}, our algorithm is semi-complete, so the duality cannot be used in algorithmic part an we need witness for both EB-PA and UB-PA.}

\subsection{Sound and Complete Witnesses for Universal B-PA}
\label{Universal}

Similarly to EBRFs, we can define a witness function for the UB-PA problem. The difference compared to EBRFs is that we now impose the B\"uchi ranking condition for {\em every} successor state of a state in which the witness function is non-negative. In contrast, in EBRFs we imposed the B\"uchi ranking condition only for {\em some} successor state.

\begin{definition}[UBRF]
	\label{UBRF}
	A function $f\colon \states \to \R^n$ is called a \textit{$\B$-Universal Büchi Ranking Function} ($\B$-UBRF) if it satisfies the following conditions:
	\begin{compactitem}
		\item $f(s) \geq 0$ for \textbf{every} $s\in \states_\init$
		\item For \textbf{every} $s_1,s_2 \in \states$ such that $s_1 \tstrans s_2$,  $(s_1,s_2)$ is Büchi-ranked by $f$.
	\end{compactitem}
\end{definition}
We have the following theorem, which establishes that UBRFs provide a sound and complete certificate for the UB-PA problem. The proof is similar to the existential case and presented in Appendix~\ref{app:universal-proof}. The subsequent corollary then follows from Lemma~\ref{reduction} which shows that the LTL-VP problem can be reduced to the UB-PA problem if $\formula$ admits a deterministic B\"uchi automaton.

\begin{theorem}[Soundness and Completeness of UBRFs for UB-PA]\label{thm:ubrf}
	There exists a $\B$-UBRF $f$ for $\T$ with Büchi specification $\B$ if and only if the answer to the UB-PA problem of $\T$ and $\B$ positive.
\end{theorem}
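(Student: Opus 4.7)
The plan is to mirror the strategy used for Theorem~\ref{thm:ebrf}, but now every infinite trajectory of $\T$ must satisfy the ranking conditions \emph{uniformly} rather than along a single chosen successor. The two directions are handled separately.

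For soundness, I would take an arbitrary run $\run = s_0, s_1, \ldots$ of $\T$ and first establish by induction on $i$ that $f(s_i) \geq 0$. The base case $i=0$ is immediate from the initial-state clause of Definition~\ref{UBRF}. For the inductive step, \emph{every} pair $(s_i, s_{i+1})$ is B\"uchi-ranked by $f$, so the implication $f(s_i)\geq 0 \Rightarrow f(s_{i+1})\geq 0$ preserves non-negativity regardless of whether $s_i \in \B$. To conclude that $\run$ visits $\B$ infinitely often, I would argue by contradiction: if there is a last index $N$ with $s_N \in \B$, then for every $i > N$ we have $s_i \notin \B$ and $f(s_i)\geq 0$, so the second clause of the ranking condition yields $f(s_{i+1}) \leq f(s_i) - 1$. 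This produces an infinite sequence of non-negative reals decreasing by at least $1$ each step, a contradiction.

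For completeness, assume every run of $\T$ is $\B$-B\"uchi. The key construction is a worst-case distance function: for every reachable state $s$, let
\[
d(s) = \sup\bigl\{\, k \geq 0 \,:\, \exists\, s = s_0 \tstrans s_1 \tstrans \cdots \tstrans s_k \text{ with } s_0,\ldots,s_{k-1}\notin\B \text{ and } s_k\in\B \,\bigr\},
\]
so in particular $d(s)=0$ whenever $s\in\B$. For unreachable $s$, set $d(s) = -1$ to make the ranking implication vacuous. The first non-trivial step is showing $d(s)<\infty$ for every reachable $s$. Because $\tstrans$ is a finite set, each state has only finitely many successors, so the tree of finite $\B$-avoiding paths from $s$ (with only the leaf allowed in $\B$) is finitely branching. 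Moreover it contains no infinite branch, since any infinite $\B$-avoiding continuation from a reachable $s$ could be spliced onto a finite reaching prefix to produce an infinite run violating the $\B$-B\"uchi assumption. K\"onig's lemma then forces the tree to be finite, so the supremum is attained.

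Finally, setting $f := d$, the two UBRF conditions are straightforward to check. The initial-state clause holds because every $s_\init \in \initstates$ is reachable, hence $d(s_\init)\geq 0$. For the ranking clause at a transition $s_1 \tstrans s_2$: if $s_1$ is unreachable the implication is vacuous; otherwise $s_2$ is also reachable and $d(s_2)\geq 0$. If $s_1 \in \B$ this is already enough. If $s_1 \notin \B$, then any extremal $\B$-reaching path out of $s_2$ of length $d(s_2)$ prepends via $s_1 \tstrans s_2$ to a $\B$-reaching path from $s_1$ of length $d(s_2)+1$, so $d(s_2) \leq d(s_1) - 1$. The main obstacle I anticipate is the clean application of K\"onig's lemma; specifically, one must combine finite-branching (immediate from $\tstrans$ being finite) with the splicing argument that rules out $\B$-avoiding infinite continuations from any reachable state, and carefully handle the corner case $s \in \B$ where the tree collapses to just its root.
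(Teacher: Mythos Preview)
Your proposal is correct and follows essentially the same strategy as the paper: the soundness argument is identical, and for completeness both you and the paper construct the same worst-case distance-to-$\B$ function (yours explicitly, the paper's via the recursion $d(s)=1+\sup_{s\tstrans s'}d(s')$) and verify the UBRF axioms. The only cosmetic difference is that you establish finiteness of $d$ via K\"onig's lemma, whereas the paper unfolds the same greedy successor-chasing argument directly (if $d(s_0)=\infty$ then some successor has $d(s_1)=\infty$, iterate); both rest on the same finite-branching observation coming from $|\tstrans|<\infty$.
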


%Similar to Corollary \ref{ELTL-PA-EBRF}, soundness and completeness of UBRFs for ULTL-PA can be immediately concluded.

\begin{corollary}
	If $\formula$ is an LTL formula that admits a DBW $D$, the answer to the LTL-VP problem of $\T$ and $\formula$ is positive iff there exists a $\B_D^\T$-UBRF for $\T \times D$.
\end{corollary}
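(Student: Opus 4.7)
The plan is to obtain this corollary as a direct chaining of two earlier results: the second bullet of Lemma~\ref{reduction} and Theorem~\ref{thm:ubrf}. Since $\formula$ admits a deterministic B\"uchi automaton $D$, the hypothesis of the second bullet of Lemma~\ref{reduction} is met, so LTL-VP for $(\T,\formula)$ reduces to UB-PA on the product transition system $\T \times D$ with B\"uchi specification $\B_D^\T$. Theorem~\ref{thm:ubrf}, applied to that product and that specification, then gives an equivalence between a positive UB-PA answer and the existence of a $\B_D^\T$-UBRF for $\T \times D$. Composing the two equivalences yields the corollary.

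Concretely, I would carry out the argument in two steps. First, invoke Lemma~\ref{reduction} to replace the original statement ``all runs of $\T$ satisfy $\formula$'' by the equivalent statement ``all runs of $\T \times D$ are $\B_D^\T$-B\"uchi''. This is exactly where determinism of $D$ is used: because $D$ is a DBW, each run of $\T$ has a unique matching run in the product, so a universal quantification in $\T$ transfers cleanly to a universal quantification in $\T \times D$. Second, instantiate Theorem~\ref{thm:ubrf} with the transition system $\T \times D$ and the B\"uchi specification $\B = \B_D^\T$ to conclude the ``iff'' with the existence of a $\B_D^\T$-UBRF.

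I do not anticipate any real obstacle, since both ingredients are already established. The only point worth flagging is that Theorem~\ref{thm:ubrf} is stated for an arbitrary transition system and arbitrary B\"uchi specification, so one should briefly note that $\T \times D$ (as formally constructed in Appendix~\ref{app:LTL2BA}) is itself a transition system in the sense of Section~\ref{prelim} and that $\B_D^\T = \locations \times \nbaaccept \times \R^n$ is a subset of its state space, which is immediate from the definition of the product. Once this small bookkeeping observation is made, the corollary is a one-line consequence of Lemma~\ref{reduction} and Theorem~\ref{thm:ubrf}.
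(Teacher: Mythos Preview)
Your proposal is correct and follows exactly the paper's approach: the paper states that the corollary ``follows from Lemma~\ref{reduction} which shows that the LTL-VP problem can be reduced to the UB-PA problem if $\formula$ admits a deterministic B\"uchi automaton,'' and then Theorem~\ref{thm:ubrf} supplies the equivalence with the existence of a $\B_D^\T$-UBRF. Your additional remark about checking that $\T \times D$ is a transition system and $\B_D^\T$ a valid B\"uchi specification is a reasonable piece of bookkeeping that the paper leaves implicit.
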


\paragraph{Remark} Note that if the transition system $\T$ is deterministic, (i.e.~it contains no non-determinism in initial states, assignments or branches)
the LTL-VP of $\T$ and $\formula$ will be equivalent to the LTL-RP of $\T$ and $\neg \formula$. %since all runs of $\T$ satisfy $\formula$ if and only if there is a run that satisfies $\formula$. 
Thus, in this case, the B\"uchi automaton determinism assumption can be relaxed as follows: if $N$ is an NBW that accepts the same language as $\formula$, the answer to the LTL-VP of $\T$ and $\formula$ is positive if and only if there exists a $\B_N^T$-EBRF for $\T \times N$.  

	\section{Template-based Synthesis of Polynomial Witnesses}
\label{algo}

%As shown in Lemma~\ref{exists_reduction} and Lemma~\ref{universal_reduction}, given a transition system $\T$ and an LTL fomula $\formula$ over $\T$, the ELTL-PA and ULTL-PA problems can be reduced to the EB-PA and UB-PA problems over the product of $\T$ and the Büchi automaton corresponding to $\formula$. We then showed in Theorem~\ref{thm:ebrf} and Theorem~\ref{thm:ubrf} that EBRFs and UBRFs provide sound and complete witnesses for the EB-PA and the UB-PA problems, respectively. These results are general and apply to all programs with finite non-determinism.

We now present our fully automated algorithms to synthesize polynomial EBRFs and UBRFs in polynomial transition systems. A transition system $\T$ is said to be {\em polynomial} if guards and updates of all transitions in $\T$ are polynomial expressions over program variables $\V$. Given a polynomial transition system $\T$ and a Büchi specification $\B$, which was obtained from an LTL formula as above, our approach synthesizes polynomial EBRFs and UBRFs of any desired degree, assuming that they exist.
Our algorithms follow a template-based synthesis approach, similar to the methods used for reachability and termination analysis~\cite{reach20,invariant20}. In particular, both EBRF and UBRF synthesis algorithms first fix a symbolic polynomial template function for the witness at each location in $\T$. The defining conditions of EBRFs/UBRFs are then expressed as entailment constraint of the form 
\begin{equation} \label{eq:entail} \textstyle \exists c \in \R^m ~~ \forall e \in \R^n ~~ (\phi \Rightarrow \psi),\end{equation}
where $\phi$ and $\psi$ are conjunctions of polynomial inequalities. We show that this translation is sound and complete. However, such constraints are notoriously difficult to solve due to the existence of a quantifier alternation. Thus, we use the sound and semi-complete technique of~\cite{reach20} to eliminate the quantifier alternation and translate our constraints into a system of purely existentially quantified quadratic inequalities. Finally, this quadratic programming instance is solved by an SMT solver. We note that a central technical difficulty here is to come up with sound and complete witness notions whose synthesis can be reduced to solving entailment constraints of the form~\eqref{eq:entail}. While~\cite{reach20,invariant20} achieved this for termination and reachability, our EBRF and UBRF notions significantly extend these results to arbitrary LTL formulas. 

As is common in static analysis tasks, we assume that the transition system comes with an invariant $\invariant_l$ at every location $l$ in $\T$. Invariant generation is an orthogonal and well-studied problem. In polynomial programs, invariants can be automatically generated using the tools in~\cite{aspic,invariant20,icra}. Alternatively, one can encode an inductive invariant via constraints of the form~\eqref{eq:entail}. This has the extra benefit of ensuring that we always find an invariant that leads to a witness for our LTL formula, if such a witness exists, and thus do not sacrifice completeness due to potentially loose invariants. See~\cite{invariant20} for details of the encoding. This is the route we took in our tool, i.e.~our tool automatically generates the invariants it requires using the sound and complete method of~\cite{invariant20}. For brevity, we removed the invariant generation part from the description of the algorithms below.

%The algorithms are similar to those introduced in \cite{reach20,invariant20,nonterm21,linear03}.  It is a classical approach for generating ranking functions, invariants, etc. to first fix a template at each location, then generate a set of constraint pairs of the form $(\phi \Rightarrow \psi)$, translate the constraint pairs into a system of quadratic inequalities and finally solve the system using an external tool. Thus, our algorithm is parameterized by $d$, the maximum out-degree of states in $\T$, and $D$, the maximum polynomial degree of generated BRF.
\paragraph{Synthesis of Polynomial EBRFs}
We now present our algorithm for synthesis of a polynomial EBRF, given a polynomial transition system $\T = (\variables, \locations, l_{\init}, \precondition, \tstrans)$ and Büchi specification $\B$ obtained from an LTL formula with polynomial inequalities in $\AP$. We present a detailed example that illustrates the steps of the algorithm in appendix \ref{app:example}. The algorithm has five steps:

\begin{compactenum}
	\item \textit{Fixing Symbolic Templates.} 
	Let $M_\V^D = \{m_1, m_2, \dots, m_k\}$ be the set of all monomials of degree at most $D$ over the set of variables $\variables$. In the first step, the algorithm generates a symbolic polynomial template for the EBRF at each location $l \in \locations$ as follows: $\textstyle f_l(x) = \Sigma_{i=1}^{k} c_{l,i} \cdot m_i.$
	Here, all the $c$-variables are fresh symbolic template variables that represent the coefficients of polynomial expressions in $f$. The goal of our synthesis procedure is to find a concrete valuation of $c$ variables for which $f$ becomes a valid $\B$-EBRF for $\T$.
	
	% \paragraph{Example} Consider the transition system in Figure \ref{fig:2} and suppose $D=2,$ i.e.~the goal of the algorithm is to generate a polynomial EBRF of degree at most 2 for the given transition system. Table~\ref{table:template} shows the template generated for each location. Note that the $c_{l,i}$ variables are treated as unknowns and the goal of the algorithm is to find suitable valuations for them so that they create a $\B$-EBRF. 
	
	% \begin{table}[t]
	% \begin{center}
	% 	\vspace{-.5em}
	% \begin{tabular}{|c|c|}
	% 	\hline
	% 	Location & Template \\
	% 	\hline 
	% 	$(l_\init,q_0)$ & $c_{\init,0,0} + c_{\init,0,1} \cdot x_0 + c_{\init,0,2} \cdot x_0^2$ \\
	% 	$(l_1,q_0)$ & $c_{1,0,0} + c_{1,0,1} \cdot x_0 + c_{1,0,2} \cdot x_0^2$ \\
	% 	$(l_1, q_1)$ & $c_{1,1,0} + c_{1,1,1} \cdot x_0 + c_{1,1,2} \cdot x_0^2$ \\
	% 	\hline
	% \end{tabular}
	% \begin{tabular}{|c|c|}
	% 	\hline
	% 	Location & Template \\
	% 	\hline 
	% 			$(l_2,q_0)$ & $c_{2,0,0} + c_{2,0,1} \cdot x_0 + c_{2,0,2} \cdot x_0^2$ \\
	% 	$(l_3,q_0)$ & $c_{3,0,0} + c_{3,0,1} \cdot x_0 + c_{3,0,2} \cdot x_0^2$ \\
	% 	$(l_t, q_0)$ & $c_{t,0,0} + c_{t,0,1} \cdot x_0 + c_{t,0,2} \cdot x_0^2$ \\ \hline
	% \end{tabular}
	% 	\caption{The template EBRF generated for the transition system in Figure \ref{fig:2}}
	% 	\vspace{-3em}
	% 			\label{table:template}
	% 	\end{center}
	% \end{table} 

	%
	\item \textit{Generating Entailment Constraints.} For every location $l \in L$ and variable valuation $x \models \invariant_l$, there must exist an outgoing transition $\tau$ such that $x \models \guard_\tau$ and $\tau$ is Büchi-ranked by $f$ in $x$. The algorithm symbolically writes down this condition as an entailment constraint: 
	$ \label{cp2} \forall x \in \R^n ~~\, x \models (\phi_l \Rightarrow \psi_l)
	$
	 with $\phi_{l}$ and $\psi_{l}$ symbolically computed as follows:
	$
	\textstyle
	\phi_{l} := \invariant_l \wedge f_l(x) \geq 0 \text{ and } 	\psi_{l} \equiv \bigvee_{\tau \in \outtrans_l } \guard_\tau \wedge \B\textrm{--}Rank(\tau),
	$
	where for each $\tau = (l,l', \guard_\tau, \update_\tau)$ the predicate $\B\textrm{--}Rank$ is defined as
	\begin{small}
	\[
	 \B\textrm{--}Rank\equiv
	\begin{small} \begin{cases}
		f_{l'}(\update_\tau(x)) \geq 0 \wedge f_{l'}(\update_\tau(x)) \leq f_l(x)-1 & l \notin \B \\
		f_{l'}(\update_\tau(x)) \geq 0	& l \in \B
	\end{cases} \end{small}
        \]
	\end{small}
    The algorithm then writes $\psi_l$ in disjunctive normal form as $\textstyle \vee_{i=1}^k \psi_{l,i}$. Next, the algorithm rewrites $\phi_l \Rightarrow \psi_l$ equivalently as:
	\begin{equation} 
	\textstyle (\phi_l \wedge \bigwedge_{i=1}^{k-1} \neg \psi_{l,i}) \Rightarrow \psi_{l,k} \label{const_pair} 
	\end{equation}
	This rewriting makes sure that we can later manipulate the constraint in~\eqref{cp2} to fit in the standard form of~\eqref{eq:entail}\footnote{We have to find values for $c$-variables that satisfy all these constraints conjunctively. This is why we have an extra existential quantifier in~\eqref{eq:entail}.}.
	Intuitively, \eqref{const_pair} ensures that whenever $l$ was reached and each of the first $k-1$ outgoing transitions were either unavailable or not Büchi-ranked by $f$, then the last transition has to be available and Büchi-ranked by $f$. Our algorithm populates a list of all constraints and adds the constraint~\eqref{const_pair} to this list before moving to the next location and repeating the same procedure.
	Note that in all of the generated constraints of the form~\eqref{const_pair}, both the LHS and the RHS of the entailment are boolean combinations of polynomial inequalities over program variables.
	
	% \paragraph{Example (Continued)} We give the entailment constraints for location $(l_2,q_0)$. Entailment constraints from other locations can be derived similarly. \\
	% Let $\invariant_{(2,0)} \equiv x_0 \geq 0$ be the invariant for $(l_2,q_0)$. The algorithm symbolically computes the following:
	% \begin{equation}
	% \label{example2}
	% \hspace{-2em}\begin{matrix}
	% 	\big[ x_0 \geq 0 \wedge c_{2,0,0} + c_{2,0,1} \cdot x_0 + c_{2,0,2}  \cdot  x_0^2 \geq 0 \big] \Rightarrow \\
		
	% 	 \big[ c_{1,1,0} + c_{1,1,1}  \cdot (x_0^2+1) + c_{1,1,2} \cdot (x_0^2+1)^2 \geq 0 ~~\wedge  \\
	% 	 c_{1,1,0} + c_{1,1,1}  \cdot  (x_0^2+1) + c_{1,1,2}  \cdot  (x_0^2+1)^2 \leq c_{2,0,0} + c_{2,0,1}  \cdot  x_0 +  \cdot c_{2,0,2}  \cdot  x_0^2-1 \big] 
	% \end{matrix}
	% \end{equation}
	% This is to formulate that whenever a state $s=(l_2,q_0,x_0)$ is reached and $f(s)$ is non-negative, there exists a successor state $s'$ such that $0 \leq f(s') \leq f(s)-1$. As $(l_2,q_0)$ has only one successor in $\T$, this has the same format as $(\ref{const_pair})$.

	\item \textit{Reduce Constraints to Quadratic Inequalities.} 
	To solve the constraints generated in the previous step, we directly integrate the technique of~\cite{reach20} into our algorithm. This is a sound and semi-complete approach based on Putinar's Positivstellensatz. We will provide an example below, but refer to~\cite{reach20} for technical details and proofs of soundness/completeness of this step. 

	In this step, for each constraint of the form $\Phi \Rightarrow \Psi$, the algorithm first rewrites $\Phi$ in disjunctive normal form as $\phi_1 \vee \dots \vee \phi_t$ and $\Psi$ in conjunctive normal form as $\Psi \equiv \psi_1 \wedge \dots \wedge \psi_r$. Then for each $1 \leq i \leq t$ and $1 \leq j \leq r$ the algorithm uses Putinar’s Positivstellensatz in the exact same way as in \cite{reach20} to generate a set of  quadratic inequalities equivalent to $\phi_i \Rightarrow \psi_j.$ The algorithm keeps track of a quadratic program $\Gamma$ and adds these new inequalities to it conjunctively.

	\item \textit{Handling Initial Conditions.} Additionally, for every variable $x \in \V$, the algorithm introduces another symbolic template variable $t_x,$ modeling the initial value of $x$ in the program, and adds the constraint $[\precondition(t) \wedge f_{\initloc}(t) \geq 0]$ to $\Gamma$ to impose that there exists an initial state in $\T$ at which the value of the EBRF $f$ is non-negative.
 	%Using the equivalencies in figure \ref{equiv} the algorithm converts each constraint pair to a set of constraint pairs of the form $P \Rightarrow Q$ where $P$ is a conjunctive arithmetic predicate and $Q$ is a single polynomial inequality on variables in $\V$ with coefficients that are linear expressions in unknown template variables. Translating such constraint pairs into a system $\Phi$ of quadratic inequalities is done by applying Putinar’s Positivstellensatz the same way as in \cite{reach20}. \textcolor{red}{(extend citation?)} Their method consists of choosing four technical parameters $\Upsilon_1, \Upsilon_2, \Upsilon_3, \Upsilon_4$, which applies to our algorithm as well. 
	%
	
	% \paragraph{Example (Continued)} To ensure existence of $s_\init \in \initstates$ such that the generated EBRF has non-negative value on $s_\init$ (as required by the definition of an EBRF), the algorithm adds $[t_{x_0} \geq 0 \wedge f_{(\init,0)}(t_{x_0}) \geq 0]$ to $\Gamma$.
	
	\item \textit{Solving the System.} Finally, the algorithm uses an external solver (usually an SMT solver) to compute values of $t$ and $c$ variables for which $\Gamma$ is satisfied. If the solver succeeds in solving the system of constraints $\Gamma$, the computed values of $c$ and $t$ variables give rise to a concrete instance of an $\B$-EBRF for $\T$. This implies that the answer to the EB-PA problem is positive, and the algorithm return ``Yes''. Otherwise, the algorithm returns ``Unknown'', as there might exist a $\B$-EBRF for $\T$ of higher maximum polynomial degree $D$ or a non-polynomial $\B$-EBRF. 
	
% 	\paragraph{Example (Continued)} The EBRF shown in the Example of Section~\ref{sec:witness} satisfies the above constraints and corresponds to the following solution to the QP instance $\Gamma:$
% 	\[
% 	\begin{small}
% 	\begin{matrix}
% 		c_{0,0,0}=3 &~ c_{0,0,1}=1 &~ c_{0,0,2}=0 &~
% 		c_{1,0,0}=2 &~ c_{1,0,1}=1 &~ c_{1,0,2}=0\\
% 		c_{2,0,0}=1 &~ c_{2,0,1}=1 &~ c_{2,0,2}=0 &~
% 		c_{1,1,0}=0 &~ c_{1,1,1}=0 &~ c_{1,1,2}=0\\
% 		c_{3,0,0}=-1 &~ c_{3,0,1}=0 &~ c_{3,0,2}=0 &~~~
% 		c_{t,0,0}=-1 &~ c_{t,0,1}=0 &~ c_{t,0,2}=0
% 		 & t_{x_0} = 0
% 	\end{matrix}
% \end{small}
% 	\]
% 	This solution was obtained by passing $\Gamma$ to the Z3 SMT solver.
\end{compactenum}

\begin{theorem}[Existential Soundness and Semi-Completeness]
	\label{algorithm_results_exist}
	The algorithm above is a sound and semi-complete reduction to quadratic programming for synthesizing an EBRF in a polynomial transition system $\T$ given a  Büchi specification $\B$ obtained from an LTL formula with polynomial inequalities in $\AP.$  Moreover, for any fixed $D,$ the algorithm has sub-exponential complexity.
\end{theorem}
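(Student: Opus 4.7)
The plan is to address soundness, semi-completeness and the sub-exponential complexity bound separately, by tracking the fate of the EBRF conditions through the five steps of the algorithm. Steps 1, 2 and 4 are symbolic rewritings that, once the template degree is fixed to $D$, are logically equivalent to the defining conditions of a $\B$-EBRF from Definition~\ref{EBRF}. Step 3 is the only step where equivalence degrades to a one-sided implication: Putinar's Positivstellensatz gives a sound and semi-complete reduction of the quantifier-alternating constraint of the form \eqref{eq:entail} to a purely existential quadratic program. Step 5 is an off-the-shelf solving call. With this structure in place, soundness will follow by composing the sound directions of Steps 2, 3 and 4 and invoking Theorem~\ref{thm:ebrf}, while semi-completeness will follow by composing the complete directions everywhere except at Step 3, where I would simply inherit the semi-completeness of Putinar's Positivstellensatz from~\cite{reach20}.

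\textbf{Soundness.} Given a solution of $\Gamma$, I would instantiate the template variables $c_{l,i}$ to obtain a concrete polynomial $f_l$ at each location and instantiate the auxiliaries $t_x$ to obtain a concrete initial valuation. The constraint $[\precondition(t)\wedge f_{\initloc}(t)\geq 0]$ added in Step 4 directly certifies the first bullet of Definition~\ref{EBRF}. For the second bullet, I would invoke the soundness direction of Putinar's Positivstellensatz, as formulated in~\cite{reach20}, to conclude that each quadratic clause produced in Step 3 implies the corresponding entailment $\phi_i \Rightarrow \psi_j$; conjoining over $i,j$ recovers $\phi_l \Rightarrow \psi_l$ and hence the rewritten form \eqref{const_pair} at every $l$. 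Unfolding the definitions of $\phi_l$ and $\psi_l$, this is precisely the statement that at every state $(l,e)$ with $e\models \invariant_l$ and $f_l(e)\geq 0$, at least one outgoing transition is enabled and B\"uchi-ranks the pair. Applying Theorem~\ref{thm:ebrf} then yields a positive answer to the EB-PA problem.

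\textbf{Semi-completeness.} Conversely, suppose a polynomial $\B$-EBRF $f^\star$ of degree at most $D$ exists. Its coefficients furnish a valuation of the $c_{l,i}$ that makes every entailment produced in Step 2 valid over the reals, and any initial state witnessing non-negativity of $f^\star$ supplies the required $t$. The only gap is Step 3: Putinar's Positivstellensatz is complete only under the archimedean hypothesis adopted in~\cite{reach20}, so the reduction to $\Gamma$ is semi-complete but not complete. The composed procedure inherits precisely this semi-completeness, matching the statement of the theorem.

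\textbf{Complexity and main obstacle.} For fixed $D$, the number of degree-at-most-$D$ monomials over $\V$ is $\binom{n+D}{D}=O(n^D)$, so the total number of template variables is polynomial in $|\T|$. Steps 2 and 4 produce a number of constraints polynomial in the number of locations and transitions, and the DNF/CNF expansions in Step 3 have bounded arity for fixed $D$ and a fixed bound on the number of atoms per guard. The Positivstellensatz multipliers of bounded degree contribute only polynomially many additional variables, so $\Gamma$ has polynomial size. Since the existential theory of the reals with a polynomial number of variables and polynomially bounded degree is decidable in sub-exponential time, solving $\Gamma$ inherits that bound, reproducing the analysis in~\cite{reach20}. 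I expect the main technical obstacle to be controlling the DNF blow-up when a location has many outgoing transitions: this is handled, as in~\cite{reach20}, by processing each disjunct independently so that the number of resulting quadratic clauses remains polynomial for fixed $D$.
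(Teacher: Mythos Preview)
Your proposal is correct and follows essentially the same approach as the paper: Steps~1, 2 and~4 are sound and complete symbolic encodings of the EBRF conditions, Step~3 inherits soundness and semi-completeness from the Putinar-based reduction of~\cite{reach20}, and the complexity claim follows because for fixed $D$ the reduction to the QP instance $\Gamma$ is polynomial-time and QP (equivalently, the existential first-order theory of the reals) is decidable in sub-exponential time. Your write-up is considerably more detailed than the paper's own proof, which simply asserts the equivalence of Steps~1, 2, 4, cites~\cite{reach20} as a black box for Step~3, and invokes~\cite{grigor1988solving} for the sub-exponential bound; in particular, your explicit discussion of the DNF blow-up in Step~2/3 as the chief obstacle to polynomiality is not spelled out in the paper.
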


In the above theorem, soundness means that every solution to the QP instance is a valid EBRF and semi-completeness means that if a polynomial EBRF exists and the chosen maximum degree $D$ is large enough, then the QP instance will have a solution.
In practice, we simply pass the QP instance to an SMT solver. Since it does not include a quantifier alternation, the SMT solvers have dedicated heuristics and are quite efficient on QP instances.

\paragraph{Synthesis of Polynomial UBRFs} Our algorithm for synthesis of UBRFs is almost the same as our EBRF algorithm, except that the constraints generated in Steps 2 and 4 are slightly different.

\paragraph{Changes to Step 2} Step 2 is the main difference between the two algorithms. In this step, for each location $l \in L$ and each transition $\tau \in \outtrans_l$ the UBRF algorithm adds $(\phi_{l,\tau} \Rightarrow \psi_{l,\tau})$ to the set of constraints, where we have
$
		\phi_{l,\tau} \equiv \invariant_l \wedge \guard_\tau \wedge f_l(x) \geq 0 \text{ and }
		\psi_{l,\tau} \equiv  \B\textrm{--}Rank(\tau).
$
	The intuition behind this step is that whenever a transition is enabled, it has to be Büchi-ranked by $f$.

\paragraph{Changes to Step 4} In this step, instead of searching for a suitable initial valuation for program variables, the algorithm adds the quadratic inequalities equivalent to $(\precondition \Rightarrow f_{\initloc}(x) \geq 0)$ to $\Gamma$. The quadratic inequalities are obtained exactly as in Step 3. This is because the value of the UBRF must be non-negative on every initial state of the transition system.

In the universal case, we have a similar theorem of soundness and semi-completeness whose proof is exactly the same as Theorem~\ref{algorithm_results_exist}.

\begin{theorem}[Universal Soundness and Semi-Completeness]
\label{algorithm_results_uni}
	The algorithm above is a sound and semi-complete reduction to quadratic programming for synthesizing an UBRF in a polynomial transition system $\T$ given a Büchi specification $\B$ obtained from an LTL formula with polynomial inequalities in $\AP.$ Moreover, for any fixed maximum polynomial degree $D,$ the algorithm has sub-exponential complexity.
\end{theorem}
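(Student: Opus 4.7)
The plan is to mirror the proof of Theorem~\ref{algorithm_results_exist}, since the UBRF algorithm differs from the EBRF one only in the constraints generated in Steps~2 and~4, while Steps~1, 3 and~5 are identical. The key work is to verify that the modified constraints still (i)~faithfully encode Definition~\ref{UBRF}, and (ii)~fit the syntactic format required by the Putinar-based translation of Step~3.

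I would first establish a one-to-one correspondence between concrete $\B$-UBRFs and satisfying assignments of the entailments produced in Steps~2 and~4. By Definition~\ref{UBRF}, $f$ is a $\B$-UBRF iff it is non-negative on every initial state and every outgoing transition is B\"uchi-ranked. Step~4 encodes the former as $\precondition\Rightarrow f_{\initloc}(x)\geq 0$, and Step~2 encodes the latter, for every location $l$ and every $\tau\in\outtrans_l$, as $(\invariant_l\wedge \guard_\tau\wedge f_l(x)\geq 0)\Rightarrow\B\textrm{--}Rank(\tau)$. Under correctness of the invariants $\{\invariant_l\}$, which is ensured by the sound and complete inductive invariant encoding of~\cite{invariant20}, these entailments are equivalent to the UBRF conditions ranging over reachable states. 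Both LHS and RHS are boolean combinations of polynomial inequalities with symbolic template coefficients, so they fit exactly the form~\eqref{eq:entail}. Unlike in the existential case, no DNF rewrite of the RHS is needed because here the conclusion does not contain a disjunction over outgoing transitions.

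Next I would invoke the soundness and semi-completeness of Step~3 as a black box, exactly as in Theorem~\ref{algorithm_results_exist}. For each entailment $\Phi\Rightarrow\Psi$ this step imports~\cite{reach20} to produce a conjunction of quadratic inequalities in $c$ (and auxiliary SOS-multiplier variables) such that: any satisfying assignment witnesses the validity of $\Phi\Rightarrow\Psi$ (soundness), and conversely, provided an Archimedean box constraint is included in the invariants, if $\Phi\Rightarrow\Psi$ holds for some template assignment of degree at most $D$ then the quadratic system is feasible (semi-completeness). Conjuncting these reductions over all entailments produced by Steps~2 and~4 yields the final quadratic program $\Gamma$, and soundness and semi-completeness of the full algorithm follow by composition. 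For the complexity claim, Step~2 produces $|\tstrans|$ entailments and Step~4 one more; the template has $|\locations|\cdot\binom{|\V|+D}{D}$ coefficient variables, polynomial in $|\T|$ for fixed $D$. Step~3 adds a polynomial number of SOS multipliers of bounded degree, so $\Gamma$ has polynomial size, and since quadratic feasibility over the reals is decidable in sub-exponential time, the overall bound follows.

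The main obstacle is the standard subtlety in the Putinar step: semi-completeness requires an Archimedean condition on the LHS of each entailment, which in the UBRF case contains the symbolic predicate $f_l(x)\geq 0$. As in Theorem~\ref{algorithm_results_exist}, this is handled by conjuncting each $\invariant_l$ with a large-box constraint on $\V$, which is sound and, for a sufficiently large box, discards no polynomial UBRF of degree at most $D$. Since the UBRF entailments have exactly the same syntactic shape as their EBRF counterparts apart from the universal versus existential treatment of outgoing transitions, this reduction transfers verbatim, completing the proof sketch.
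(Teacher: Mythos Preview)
Your proposal is correct and takes essentially the same approach as the paper: the paper's proof simply states that it is ``exactly the same as Theorem~\ref{algorithm_results_exist}'', whose proof in turn argues that Steps~1, 2 and~4 faithfully encode the witness conditions, invokes~\cite{reach20} as a black box for Step~3, and derives sub-exponential complexity from the PTIME reduction to QP together with~\cite{grigor1988solving}. Your write-up is considerably more detailed than the paper's (notably the explicit observation that the universal case avoids the DNF rewrite and the discussion of the Archimedean hypothesis), but the underlying argument is the same.
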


% \mk{
% \paragraph{Consequences of Theorems~\ref{algorithm_results_exist} and~\ref{algorithm_results_uni}} 
% This section provides the following results:
% \begin{compactenum}
%     \item Sound and semi-complete algorithm for program refutation of non-deterministic (and deterministic) programs and general LTL formulas.
%     \item Sound and semi-complete algorithm for program verification of non-deterministic programs and LTL formulas that admit DBW.   
%     \item Sound and semi-complete algorithm for program verification of deterministic programs and general LTL formulas.
% \end{compactenum}
% The latter is achieved by multiplying the Büchi automata obtained from the LTL formula in the program, and then synthesising an EBRF for it.}
	\section{Experimental Results}
\label{sec:exper}

\newcommand{\RQ}[1]{\bf{RQ#1}}

%We implemented a prototype of our algorithms for UBRF and EBRF synthesis towards experimentally evaluating our method and comparing it to the existing methods for LTL program analysis. Our prototype tool takes as input a program that we want to analyze as well as the Büchi automaton corresponding to the LTL specification of interest. It then constructs the product transition system and uses the algorithms in Section~\ref{algo} to synthesize an UBRF or an EBRF.

\paragraph{General Setup of Experiments} We implemented a prototype \footnote{Available at \href{https:/github.com/ekgma/LTL-VerP}{github.com/ekgma/LTL-VerP}} of our UBRF and EBRF synthesis algorithms in Java and used Z3 \cite{Z3}, Barcelogic \cite{bclt} and MathSAT5 \cite{mathsat} to solve the generated systems of quadratic inequalities. More specifically, after obtaining the QP instance, our tool calls all three SMT solvers in parallel. 
%%\footnote{See Appendix~\ref{app:smt} for more details on the effects of this choice.}. 
We also used ASPIC~\cite{aspic} for invariant generation for benchmarks that are linear programs. Experiments were performed on a Debian 11 machine with a 2.60GHz Intel E5-2670 CPU and 6 GB of RAM with a timeout of 1800 seconds.

%\paragraph{Research Questions} The goal of our experimental evaluation is to answer the following four research questions:
%\begin{enumerate}[label=\RQ{\arabic*}]
%	\item {\em Performance on linear programs.} How does our method perform on linear programs compared to the state of the art tools for LTL program analysis?
%	\item {\em Performance on polynomial programs.} How does our method perform on polynomial programs that cannot be handled by most existing tools for LTL program analysis?
%	\item {\em Runtime performance.} How does our method compare in terms of computation runtime compared to the state of the art methods for LTL program analysis?
%	\item {\em Necessity of reduction to QP.} In Step~3 of our algorithm in Section~\ref{algo}, our method uses Putinar's Positivstellensatz to reduce entailment constraints to a purely existentially quantified QP instance before passing it to a solver. An alternative approach would be to take resulting entailment constraints and pass them directly to a solver which will in turn apply its generic quantifier elimination. How important is Step~3 of our algorithm, and could it be replaced by simply passing the entailment constraints to a solver?
%\end{enumerate}
%Below we describe the experimental setup that we follow in order to answer these research questions. In particular, we describe the baselines that we compare our tool against as well as benchmarks and LTL specifications that we use for the experimental evaluation.

\paragraph{Baselines} We compare our tool with Ultimate LTLAutomizer~\cite{UltimatePaper}, nuXmv~\cite{nuxmv}, and MuVal \cite{unno2023muval} as well as with a modification of our method that instead of using Putinar's Positivstellensatz simply passes entailment constraints to the SMT-solver Z3~\cite{Z3}:
	\begin{compactitem}
		\item Ultimate LTLAutomizer makes use of ``Büchi programs'', which is a similar notion to our product of a transition system and a Büchi Automaton, to either prove that every lasso shaped path in the input program satisfies the given LTL formula, or find a path that violates it. However, in contrast to our tool, it neither supports non-linear programs nor provides completeness. 
		\item nuXmv is a symbolic model checker with support for finite and infinite transition systems. It allows both existential and universal LTL program analysis and supports non-linear programs. It does not provide any completeness guarantees.
		\item MuVal~\cite{unno2023muval} is a fixed-point logic validity checker based on pfwCSP solving \cite{unno2021csp}. It supports both linear and non-linear programs with integer variables and recursive functions. 
		\item When directly applying Z3, instead of the dedicated quantifier elimination method (Step 3 of our algorithm), we directly pass the quantified formula~\eqref{eq:entail} to the solver, which will in turn apply its own generic quantifier elimination. This is an ablation experiment to check whether Step 3 is needed in practice. %This comparison will allow us to answer our fourth research question. %This enables us to compare our tool with a simpler method that uses the Z3 SMT solver for quantifier elimination. The goal of is to study the importance of using sum-of-square representations (Step 3) in our method.
		%This way, by a similar argument, any valuation of template variables satisfying $Q(\Delta)$ for all constraint pairs $\Delta$ yields a valid BRF. Therefore, a quantifier elimination algorithm can be applied to find such valuation for template variables. Z3 is an SMT solver designed by Microsoft that supports quantifier elimination.
		% We implemented a method similar to our UBRF and EBRF synthesis without using Positivstellensatz. This way, instead of having a system of quadratic inequalities, we obtain an $\exists \forall$-quantified formula with the existential quantifier on template variables. Therefore, any model of such a formula represents a valid BRF. 
	\end{compactitem}

%%\paragraph{Remark} F3~\cite{F3paper} is another tool which searches for fair paths in an infinite-state transition system. Unfortunately, the input syntax of this tool is not well-specified and we could not derive the syntax ourselves or run it on our benchmarks.

%\begin{remark}
	%F3 \cite{F3_paper} is a tool that searches for fair paths in a transition system. However, the input syntax of this tool is not specified and we could not derive the syntax ourselves, hence, we could not include it in our comparison. The tool is supposed to handle the EB-PA problem with both linear and non-linear arithmetic, with completeness guarantees on linear benchmarks. However, the authors mention in the description of the tool that every non-linear expression is replaced with fresh symbols to obtain a linear system. This introduces new runs in the transition system that might be fair, yet not corresponding to any run in the original transition system, resulting an unsound output from the tool.
%\end{remark}

\paragraph{Benchmarks} We gathered benchmarks from two sources:
\begin{compactitem}
	\item 297 benchmarks from the ``Termination of C-Integer Programs'' category of TermComp'22~\cite{termcomp}\footnote{There were originally 335 benchmarks, but we had to remove benchmarks with unbounded non-determinism and those without any variables, since they cannot be translated to transition systems and are not supported in our setting.}. Among these, 287 programs only contained linear arithmetic which is supported by all comparator tools, whereas 10 programs (Appendix ~\ref{app:non-lin}) contained polynomial expressions not supported by Ultimate.
	\item 21 non-linear benchmarks from the ``ReachSafety-Loops \texttt{nl-digbench}'' category of SV-COMP'22 \cite{svcomp}\footnote{The original benchmark set contains 28 programs, but 7 of them contain unsupported operators such as integer mod and are thus not expressible in our setting.}. As these benchmarks are all non-linear, none of them are supported by Ultimate. 
\end{compactitem} 
%We ran our tool and the four baselines on 298 benchmarks from the ``Termination of C-Integer Programs'' category of TermComp'21~\cite{termcomp}.  For each benchmark, we ran each tool to first try to prove (universal) and then refute (existential) the four LTL specifications shown in Table~\ref{formula_description}. We also note that we ran our tool on our motivating examples in Section~\ref{sec:motivatingexamples} and that our tool successfully proves all $4$ specifications ($2$ universal and $2$ existential) described in the examples.

\paragraph{LTL specifications} We used the four LTL specifications shown in Table~\ref{formula_description}. In all four considered specifications, $x$ represents the alphabetically first variable in the input program. The motivation behind our specifications is as follows:
\begin{compactitem}
	\item {\em Reach-avoid (RA) specifications.} The first specification is an example of a reach-avoid specification, which specifies that a program run should terminate without ever making $x$ negative. Reach-avoid specifications are standard in the analysis of dynamical and hybrid systems~\cite{reach-avoid1,reach-avoid2,ZikelicLHC23}. Another example is requiring a program to termination while satisfying all program assertions. 
	\item {\em Overflow (OV) specifications.} Intuitively, we want to evaluate whether our approach is capable of detecting variable overflows. The second specification specifies that each program run either terminates or the value of the variable $x$ overflows. Specifically, suppose that an overflow is handled as a runtime error and ends the program. The negation (refutation) of this specification models the existence of a run that neither terminates nor overflows and~so~converges.
	\item {\em Recurrence (RC) specifications.} The third specification is an instance of recurrence specifications which specify that a program run visits a set of states infinitely many times~\cite{MannaP89}. Our example requires that a program run contains infinitely many visits to states in which $x$ has a non-negative value.
	\item {\em Progress (PR) specifications.} The fourth specification is an example of progress specifications. In our experimental evaluation, progress specification specifies that a program run always makes progress from states in which the value of $x$ is less than $-5$ to states in which the value of $x$ is strictly positive.
 
 %The progress specification that we consider in our experimental evaluation specifies that a program run always makes progress from states in which the value of $x$ is less than $-5$ to states in which the value of $x$ is strictly positive.
\end{compactitem}
%The second formula asks whether, for every run in the program, the run either terminates or the value of the variable $x$ overflows. The last formula asks about a type of ``progress'' in the runs of the program. Specifically, it checks whether the program is always ensured to progress from states in which the value of $x$ is negative to states in which the value of $x$ is non-negative. The fourth formula regards a recurrence specification. Formally, it asks whether every run of the program has infinitely many states where $x \geq 0$ holds. {\textcolor{red} {We chose these formulas since they are classical instances in LTL model-checking that cannot be handled by standard tools for termination or safety analysis.} }

\begin{table}[t]
	\centering
	\resizebox{!}{0.95cm}{
	\begin{tabular}{|c|c|c|}
		\hline
		{Name} & {Formula} & {Pre-condition $\invariant_\init$}  \\
		\hline
		$RA$ & $(F\ \at(l_\term)) \wedge (G\ x \geq 0)$ & $\forall x \in \V, 0 \leq x \leq 64$ \\
		\hline 
		$OV$ & $F\ (\at(l_\term) \vee x<-64 \vee x>63)$ & $\forall x \in \V, -64 \leq x \leq 63$ \\
		\hline
		$RC$ & $G\ F\ (x \geq 0)$ & $\forall x \in \V, -64 \leq x \leq 63$ \\
		\hline 
		$PR$ & $G\ (x<-5  \Rightarrow F\ (x>0))$ & $\forall x \in \V, -64 \leq x \leq 63$ \\
		\hline 
	\end{tabular}}
	\caption{LTL specifications used in our experiments.}
	\label{formula_description}
	% \vspace{-2em}
\end{table}

\begin{table}[t]
	\centering
	\resizebox{!}{1.95cm}
        {
	\begin{tabular}{|c|c||c|c|c|c||c|c|c|c||c|c|c|c||c|c|c|c||c|c|c|c|}
		\hline
		& \multirow{2}{*}{Formula} & \multicolumn{4}{c||}{Ours} & \multicolumn{4}{c||}{Ultimate} & \multicolumn{4}{c||}{nuXmv} & \multicolumn{4}{c||}{MuVal} & \multicolumn{4}{c|}{Z3}\\
		\cline{3-22}
		& & Yes & No & Tot. & U. & Yes & No & Tot. & U. & Yes & No & Tot. & U. & Yes & No & Tot. & U. & Yes & No & Tot. & U.\\
		\hline \hline
            \parbox[t]{3mm}{\multirow{5}{*}{\rotatebox[origin=c]{90}{Linear}}}
		& $RA$ & 141 & 114 & 255 & 5 & 142 & 121 & 263 & 7 & 76 & 91 & 137 & 0 & 118 & 76 & 194 & 0 & 56 & 36 & 92 & 0 \\
		\cline{2-22}
		& $OV$ & 199 & 47 & 246 & 4 & 212 & 55 & 267 & 5 & 110 & 50 & 160 & 0 & 205 & 47 & 252 & 3 & 48 & 27 & 75 & 0\\
		\cline{2-22}
		& $RC$ & 87 & 187 & 274 & 0 & 86 & 194 & 280 & 0 & 83 & 183 & 266 & 0 & 86 & 191 & 277 & 0 & 44 & 71 & 115 & 0\\
		\cline{2-22}
		& $PR$ & 43 & 222 & 265 & 1 & 45 & 237 & 282 & 0 & 44 & 227 & 271 & 0 & 42 & 235 & 277 & 0 & 29 & 77 & 106 & 0 \\
            \hhline{~=====================|}
		& Avg. T & 5.4 & 81.5 & 47.2 & - & 5.4 & 4.1 & 4.7 & - & 248.9 & 13.5 & 98.7 & - & 48.8 & 8.43 & 26.4 & - & 18.5 & 160.6 & 95.7 & - \\
		\hline 
            \hline 
             \parbox[t]{3mm}{\multirow{5}{*}{\rotatebox[origin=c]{90}{Non-linear}}}
		& $RA$ & 24 & 3 & 27 & 8 & - & - & - & - & 1 & 0 & 1 & 0 & 18 & 1 & 19 & 2 & 0 & 0 & 0 & 0 \\
		\cline{2-22}
		& $OV$ & 26 & 0 & 26 & 2 & - & - & - & - & 7 & 0 & 7 & 0 & 25 & 0 & 25 & 1 & 0 & 0 & 0 & 0\\
		\cline{2-22}
		& $RC$ & 20 & 6 & 26 & 0 & - & - & - & - & 17 & 9 & 26 & 2 & 17 & 7 & 24 & 2 & 0 & 0 & 0 & 0\\
		\cline{2-22}
		& $PR$ & 11 & 16 & 27 & 1 & - & - & - & - & 9 & 16 & 25 & 0 & 5 & 16 & 21 & 1 & 0 & 0 & 0 & 0\\
		\hhline{~=====================|}
		& Avg. T & 10.7 & 99.1 & 32.3 & - & - & - & - & - & 34.6 & 0.3 & 20.0 & - & 109.6 & 14.7 & 84.7 & - & - & - & - & - \\
            \hline
	\end{tabular}}
	\caption{Summary of our experimental results. For each class of benchmarks (linear/non-linear) and each formula, We report in how many cases the tool could successfully prove the formula (Yes) or refute it (No), total number of cases proved by the tool (Tot.), number of instances uniquely solved by each tool and no other tools (U.), and average runtime of each tool on programs that were successfully proved as correct with respect to each specification (Avg. T).}
	\label{experiments_results_linear}
	% \vspace{-2em}
\end{table}

\paragraph{Results on Linear Programs}
The top rows of Table \ref{experiments_results_linear} summarize our results over linear benchmarks to which all tools are applicable. First, we observe that in all cases our tool outperforms the method that uses Z3 for quantifier elimination, showing that our Step 3 is a crucial and helpful part of the algorithm. Compared to nuXmv, our tool proves more instances in all but two LTL refutation and one LTL verification cases, i.e.~the ``No'' column for the OV and PR specifications and the ``Yes'' column for the PR specification. On the other hand, our prototype tool is on par with Ultimate and MuVal, while proving 10 unique instances. Note that Ultimate is a state of the art and well-maintained competition tool that is highly optimized with heuristics that aim at the linear case. In contrast, it cannot handle polynomial instances. Our results shown in Table~\ref{experiments_results_linear} demonstrate that our prototype tool is very competitive already on linear benchmarks, even though our main contribution is to provide practically-efficient semi-complete algorithms for the polynomial case.

\paragraph{Unique Instances} An important observation is that our tool successfully handles 10 unique \emph{linear} instances that no other tool manages to prove or refute. Thus, our evaluation shows that our method handles not only polynomial, but even linear benchmarks that were beyond the reach of the existing methods. This shows that our algorithm, besides the desired theoretical guarantee of semi-completeness, provides an effective automated method. Future advances in invariant generation and SMT solving will likely further improve the performance.

\paragraph{Runtimes} Our tool and Ultimate are the fastest tools for proving LTL verification instances with an equal average runtime of 5.4 seconds. For LTL refutation, our tool is slower than other tools. %However, note that there are many heuristics that can be added to our method for ELTL-PA and we only implemented our algorithm in Section~\ref{algo} without any heuristics. 

\paragraph{Results on Non-Linear Programs}	
The bottom rows of Table~\ref{experiments_results_linear} show the performance of our tool and the baselines on the non-linear benchmarks. Ultimate does not support non-linear arithmetic and Z3 timed out on every benchmark in this category. Here, compared to nuXmv, our tool succeeded in solving strictly more instances in all but one formula, i.e. $RC,$ where both tools solve the same number of instances. In comparison with MuVal, our tool proves more instances for all four formulas. Moreover, the fact that Z3 timed out for every program in this table is further confirmation of the practical necessity of Step~3 (Quantifier Elimination Procedure of~\cite{reach20}) in our algorithm. Note that our prototype could prove 11 instances that none of the other tools could handle. 
% Please see Appendix~\ref{app:non-lin} for detailed results on non-Linear programs.

%\paragraph{Heuristic} In order to speed up the synthesis, we used the following heuristic when generating EBRFs. One of the main challenges in the synthesis of EBRFs is choosing the initial variable valuation that yields a valid Büchi run. It is often the case that very low or very high initial variable values can simplify the process of finding EBRFs. Therefore, if $\precondition$ is the precondition that contains inequalities $x_i \geq c$ and $x_i \leq c'$ for some $i$ and some $c,c' \in \R$, we first check if we can generate an EBRF having $\precondition \wedge x_i = c$ or $\precondition \wedge x_i = c'$ as the precondition. Any such EBRF is an EBRF for the general case with $\precondition$ as the precondition, therefore the heuristic is sound. If no such EBRF could be found, we run the algorithm with $\precondition$ as the precondition.

\paragraph{Summary} Our experiments demonstrate that our automated algorithms are able to synthesize both LTL verification and refutation witnesses for a wide variety of programs. Our technique outperforms the previous methods when given non-linear polynomial programs (Bottom rows of Table~\ref{experiments_results_linear}). Moreover, even in the much more widely-studied case of linear programs, we are able to handle instances that were beyond the reach of previous methods and to solve the number of instances that is close to the state-of-the-art tools (Top Rows of Table~\ref{experiments_results_linear}).

	\section{Conclusion} \label{sec:conclu}

%\paragraph{Limitations and Future Directions}
%One limitation of our synthesis algorithm is that it is only applicable to polynomial programs and witnesses,  since it uses theorems from real algebraic geometry for its quantifier elimination phase. On the one hand, generalizing the analysis to non-polynomial programs is an interesting extension of our work. On the other hand, using the Stone--Weierstrass theorem~\cite{de1959stone} to approximate non-polynomial program behavior and then applying our method is another future direction. Another limitation is that our method only supports finitely-branching non-determinism for LTL verification. Our existential approach does not require this and handles  arbitrary infinitely-branching non-deterministic behavior. 

%Finally, our method is restricted to numerical programs and does not allow heap-manipulating operations. A common approach to handling heap-manipulating operations is to construct numerical abstractions of programs~\cite{DBLP:journals/fmsd/BouajjaniBHIMV11,DBLP:conf/popl/MagillTLT10} and perform the analysis on numerical abstractions. Thus, coupling such approaches, e.g.~\cite{CookK13}, with our method is a compelling future direction.

%\paragraph{Concluding Remarks}

We presented a novel family of sound and complete witnesses for template-based LTL verification.
Our approach is applicable to both verification and refutation of LTL properties in programs. It unifies and significantly generalizes previous works targeting special cases of LTL, e.g.~termination, safety and reachability. We also showed that our LTL witnesses can be synthesized in a sound and semi-complete manner by a reduction to quadratic programming. Our reduction works when the program and the witness are both polynomial.
An interesting direction of future work would be to consider non-numerical programs that allow heap-manipulating operations. A common approach to handling heap-manipulating operations is to construct numerical abstractions of programs~\cite{DBLP:journals/fmsd/BouajjaniBHIMV11,DBLP:conf/popl/MagillTLT10} and perform the analysis on numerical abstractions. Thus, coupling such approaches, e.g.~\cite{CookK13}, with our method is a compelling future direction.
%This generalizes previous algorithms for linear and polynomial programs with termination, safety, and reachability properties.

     \paragraph{Acknowledgements} This work was supported in part by the ERC-2020-CoG 863818 (FoRM-SMArt) and the Hong Kong Research Grants Council ECS Project Number 26208122.

	\bibliographystyle{splncs04}
	\bibliography{sample-base}

\begin{thebibliography}{10}
\providecommand{\url}[1]{\texttt{#1}}
\providecommand{\urlprefix}{URL }
\providecommand{\doi}[1]{https://doi.org/#1}

\bibitem{reach20}
Asadi, A., Chatterjee, K., Fu, H., Goharshady, A.K., Mahdavi, M.: Polynomial
  reachability witnesses via stellens{\"{a}}tze. In: {PLDI}. pp. 772--787
  (2021)

\bibitem{katoen08}
Baier, C., Katoen, J.: Principles of model checking. {MIT} Press (2008)

\bibitem{BaresiKR15}
Baresi, L., Kallehbasti, M.M.P., Rossi, M.: Efficient scalable verification of
  {LTL} specifications. In: {ICSE} {(1)}. pp. 711--721. {IEEE} Computer Society
  (2015)

\bibitem{DBLP:conf/memics/BauchHB14}
Bauch, P., Havel, V., Barnat, J.: {LTL} model checking of {LLVM} bitcode with
  symbolic data. In: {MEMICS}. pp. 47--59 (2014)

\bibitem{svcomp}
Beyer, D.: Progress on software verification: {SV-COMP} 2022. In: {TACAS}. pp.
  375--402 (2022)

\bibitem{bclt}
Bofill, M., Nieuwenhuis, R., Oliveras, A., Rodr{\'{\i}}guez{-}Carbonell, E.,
  Rubio, A.: The barcelogic {SMT} solver. In: {CAV}. pp. 294--298 (2008)

\bibitem{DBLP:journals/fmsd/BouajjaniBHIMV11}
Bouajjani, A., Bozga, M., Habermehl, P., Iosif, R., Moro, P., Vojnar, T.:
  Programs with lists are counter automata. Formal Methods Syst. Des.
  \textbf{38}(2),  158--192 (2011)

\bibitem{t2}
Brockschmidt, M., Cook, B., Ishtiaq, S., Khlaaf, H., Piterman, N.: {T2:}
  temporal property verification. In: {TACAS}. pp. 387--393 (2016)

\bibitem{buchi1966symposium}
B{\"u}chi, J.R.: Symposium on decision problems: On a decision method in
  restricted second order arithmetic. In: Studies in Logic and the Foundations
  of Mathematics, vol.~44, pp. 1--11 (1966)

\bibitem{oopsla}
Cai, Z., Farokhnia, S., Goharshady, A.K., Hitarth, S.: Asparagus: Automated
  synthesis of parametric gas upper-bounds for smart contracts. In: {OOPSLA}
  (2023)

\bibitem{DBLP:conf/cav/ChatterjeeFG16}
Chatterjee, K., Fu, H., Goharshady, A.K.: Termination analysis of probabilistic
  programs through positivstellensatz's. In: {CAV}. pp. 3--22 (2016)

\bibitem{invariant20}
Chatterjee, K., Fu, H., Goharshady, A.K., Goharshady, E.K.: Polynomial
  invariant generation for non-deterministic recursive programs. In: {PLDI}.
  pp. 672--687 (2020)

\bibitem{nonterm21}
Chatterjee, K., Goharshady, E.K., Novotn{\'{y}}, P., \v{Z}ikeli\'{c}, {\DJ}.:
  Proving non-termination by program reversal. In: {PLDI}. pp. 1033--1048
  (2021)

\bibitem{F3paper}
Cimatti, A., Griggio, A., Magnago, E.: {LTL} falsification in infinite-state
  systems. Inf. Comput.  \textbf{289},  104977 (2022)

\bibitem{nuxmv}
Cimatti, A., Griggio, A., Magnago, E., Roveri, M., Tonetta, S.: Extending
  {nuXmv} with timed transition systems and timed temporal properties. In:
  {CAV}. pp. 376--386 (2019)

\bibitem{mathsat}
Cimatti, A., Griggio, A., Schaafsma, B.J., Sebastiani, R.: The mathsat5 {SMT}
  solver. In: {TACAS}. pp. 93--107 (2013)

\bibitem{DBLP:conf/cdc/Clark21}
Clark, A.: Verification and synthesis of control barrier functions. In: {CDC}.
  pp. 6105--6112 (2021)

\bibitem{ClarkeGJLV00}
Clarke, E.M., Grumberg, O., Jha, S., Lu, Y., Veith, H.: Counterexample-guided
  abstraction refinement. In: {CAV} (2000)

\bibitem{henzinger2018handbook}
Clarke, E.M., Henzinger, T.A., Veith, H., Bloem, R.: Handbook of Model
  Checking. Springer (2018)

\bibitem{colon2003linear}
Col{\'o}n, M.A., Sankaranarayanan, S., Sipma, H.B.: Linear invariant generation
  using non-linear constraint solving. In: CAV. pp. 420--432 (2003)

\bibitem{DBLP:conf/tacas/CookKP15}
Cook, B., Khlaaf, H., Piterman, N.: Fairness for infinite-state systems. In:
  {TACAS}. pp. 384--398 (2015)

\bibitem{DBLP:conf/popl/CookK11}
Cook, B., Koskinen, E.: Making prophecies with decision predicates. In: {POPL}.
  pp. 399--410 (2011)

\bibitem{CookK13}
Cook, B., Koskinen, E.: Reasoning about nondeterminism in programs. In: PLDI.
  pp. 219--230 (2013)

\bibitem{DBLP:conf/cav/DanielCGTM16}
Daniel, J., Cimatti, A., Griggio, A., Tonetta, S., Mover, S.: Infinite-state
  liveness-to-safety via implicit abstraction and well-founded relations. In:
  {CAV}. pp. 271--291 (2016)

\bibitem{de1959stone}
De~Branges, L.: The {Stone-Weierstrass} theorem. Proceedings of the {AMS}
  \textbf{10}(5),  822--824 (1959)

\bibitem{UltimatePaper}
Dietsch, D., Heizmann, M., Langenfeld, V., Podelski, A.: Fairness modulo
  theory: {A} new approach to {LTL} software model checking. In: CAV. pp.
  49--66 (2015)

\bibitem{farkas}
Farkas, J.: Theorie der einfachen ungleichungen. Journal f{\"u}r die reine und
  angewandte Mathematik  \textbf{1902}(124),  1--27 (1902)

\bibitem{FarzanKP16}
Farzan, A., Kincaid, Z., Podelski, A.: Proving liveness of parameterized
  programs. In: LICS. pp. 185--196 (2016)

\bibitem{aspic}
Feautrier, P., Gonnord, L.: Accelerated invariant generation for {C} programs
  with aspic and c2fsm. Electron. Notes Theor. Comput. Sci. pp. 3--13 (2010)

\bibitem{DBLP:conf/atva/FengZJZX17}
Feng, Y., Zhang, L., Jansen, D.N., Zhan, N., Xia, B.: Finding polynomial loop
  invariants for probabilistic programs. In: {ATVA}. pp. 400--416 (2017)

\bibitem{floyd1993assigning}
Floyd, R.W.: Assigning meanings to programs. Program Verification: Fundamental
  Issues in Computer Science pp. 65--81 (1993)

\bibitem{termcomp}
Frohn, F., Giesl, J., Moser, G., Rubio, A., Yamada, A., et~al.: Termination
  competition 2022 (2021),
  \url{https://termination-portal.org/wiki/Termination\_Competition\_2022}

\bibitem{fulton2018verifiably}
Fulton, N.: Verifiably safe autonomy for cyber-physical systems. Ph.D. thesis,
  Carnegie Mellon University (2018)

\bibitem{DBLP:conf/tacas/FunkeJB20}
Funke, F., Jantsch, S., Baier, C.: Farkas certificates and minimal witnesses
  for probabilistic reachability constraints. In: {TACAS}. pp. 324--345 (2020)

\bibitem{DBLP:conf/cav/GrafS97}
Graf, S., Sa{\"{\i}}di, H.: Construction of abstract state graphs with {PVS}.
  In: {CAV}. pp. 72--83 (1997)

\bibitem{grigor1988solving}
Grigorev, D., Vorobjov, N.: Solving systems of polynomial inequalities in
  subexponential time. Journal of symbolic computation  \textbf{5}(1-2),
  37--64 (1988)

\bibitem{DBLP:conf/pldi/GulwaniSV08}
Gulwani, S., Srivastava, S., Venkatesan, R.: Program analysis as constraint
  solving. In: {PLDI}. pp. 281--292 (2008)

\bibitem{DBLP:conf/iccps/GurrietSRCFA18}
Gurriet, T., Singletary, A., Reher, J., Ciarletta, L., Feron, E., Ames, A.D.:
  Towards a framework for realizable safety critical control through active set
  invariance. In: {ICCPS}. pp. 98--106 (2018)

\bibitem{DBLP:conf/atva/HeizmannHLP13}
Heizmann, M., Hoenicke, J., Leike, J., Podelski, A.: Linear ranking for linear
  lasso programs. In: {ATVA}. pp. 365--380 (2013)

\bibitem{icra}
Kincaid, Z., Cyphert, J., Breck, J., Reps, T.W.: Non-linear reasoning for
  invariant synthesis. In: {POPL}. pp. 54:1--54:33 (2018)

\bibitem{DBLP:conf/popl/MagillTLT10}
Magill, S., Tsai, M., Lee, P., Tsay, Y.: Automatic numeric abstractions for
  heap-manipulating programs. In: {POPL}. pp. 211--222 (2010)

\bibitem{MannaP89}
Manna, Z., Pnueli, A.: A hierarchy of temporal properties. In: PODC. pp.
  377--410 (1990)

\bibitem{reach-avoid1}
Meng, Y., Liu, J.: Lyapunov-barrier characterization of robust reach-avoid-stay
  specifications for hybrid systems (2022). \doi{10.48550/ARXIV.2211.00814},
  \url{https://arxiv.org/abs/2211.00814}

\bibitem{DBLP:conf/fm/MoosbruggerBKK21}
Moosbrugger, M., Bartocci, E., Katoen, J., Kov{\'{a}}cs, L.: The probabilistic
  termination tool amber. In: {FM}. pp. 667--675 (2021)

\bibitem{Z3}
de~Moura, L.M., Bj{\o}rner, N.S.: {Z3:} an efficient {SMT} solver. In: {TACAS}.
  pp. 337--340 (2008)

\bibitem{DBLP:conf/concur/NeumannO020}
Neumann, E., Ouaknine, J., Worrell, J.: On ranking function synthesis and
  termination for polynomial programs. In: {CONCUR}. pp. 15:1--15:15 (2020)

\bibitem{DBLP:journals/fmsd/PadonHMPSS21}
Padon, O., Hoenicke, J., McMillan, K.L., Podelski, A., Sagiv, M., Shoham, S.:
  Temporal prophecy for proving temporal properties of infinite-state systems.
  Formal Methods Syst. Des.  \textbf{57}(2),  246--269 (2021)

\bibitem{ltl}
Pnueli, A.: The temporal logic of programs. In: {FOCS}. pp. 46--57 (1977)

\bibitem{podelski2004complete}
Podelski, A., Rybalchenko, A.: A complete method for the synthesis of linear
  ranking functions. In: VMCAI. pp. 239--251 (2004)

\bibitem{PodelskiR05}
Podelski, A., Rybalchenko, A.: Transition predicate abstraction and fair
  termination. In: POPL. pp. 132--144 (2005)

\bibitem{putinar1993positive}
Putinar, M.: Positive polynomials on compact semi-algebraic sets. Indiana
  University Mathematics Journal  \textbf{42}(3),  969--984 (1993)

\bibitem{rice1953classes}
Rice, H.G.: Classes of recursively enumerable sets and their decision problems.
  Transactions of the {AMS}  \textbf{74}(2),  358--366 (1953)

\bibitem{DBLP:conf/popl/SankaranarayananSM04}
Sankaranarayanan, S., Sipma, H., Manna, Z.: Non-linear loop invariant
  generation using gr{\"{o}}bner bases. In: {POPL}. pp. 318--329 (2004)

\bibitem{DBLP:conf/sas/SankaranarayananSM04}
Sankaranarayanan, S., Sipma, H.B., Manna, Z.: Constraint-based linear-relations
  analysis. In: {SAS}. pp. 53--68 (2004)

\bibitem{DBLP:journals/jossac/ShenWYZ13}
Shen, L., Wu, M., Yang, Z., Zeng, Z.: Generating exact nonlinear ranking
  functions by symbolic-numeric hybrid method. J. Syst. Sci. Complex.
  \textbf{26}(2),  291--301 (2013)

\bibitem{strejcek2004linear}
Strejcek, J.: Linear temporal logic: Expressiveness and model checking. Ph.D.
  thesis, Masaryk University (2004)

\bibitem{reach-avoid2}
Summers, S., Lygeros, J.: Verification of discrete time stochastic hybrid
  systems: {A} stochastic reach-avoid decision problem. Autom. pp. 1951--1961
  (2010)

\bibitem{DBLP:journals/toplas/TakisakaOUH21}
Takisaka, T., Oyabu, Y., Urabe, N., Hasuo, I.: Ranking and repulsing
  supermartingales for reachability in randomized programs. TOPLAS
  \textbf{43}(2),  5:1--5:46 (2021)

\bibitem{turing1936computable}
Turing, A.M.: On computable numbers, with an application to the
  entscheidungsproblem. J. of Math  \textbf{58}(345-363), ~5 (1936)

\bibitem{unno2023muval}
Unno, H., Terauchi, T., Gu, Y., Koskinen, E.: Modular primal-dual fixpoint
  logic solving for temporal verification. In: {POPL}. pp. 2111--2140 (2023)

\bibitem{unno2021csp}
Unno, H., Terauchi, T., Koskinen, E.: Constraint-based relational verification.
  In: {CAV}. pp. 742--766 (2021)

\bibitem{ZikelicLHC23}
\v{Z}ikeli\'{c}, {\DJ}., Lechner, M., Henzinger, T.A., Chatterjee, K.: Learning
  control policies for stochastic systems with reach-avoid guarantees. In:
  AAAI. pp. 11926--11935 (2023)

\bibitem{DBLP:conf/cav/WangCXZK21}
Wang, Q., Chen, M., Xue, B., Zhan, N., Katoen, J.: Synthesizing invariant
  barrier certificates via difference-of-convex programming. In: {CAV}. pp.
  443--466 (2021)

\bibitem{DBLP:journals/tcad/ZhangYLZCL18}
Zhang, Y., Yang, Z., Lin, W., Zhu, H., Chen, X., Li, X.: Safety verification of
  nonlinear hybrid systems based on bilinear programming. {IEEE} Trans. Comput.
  Aided Des. Integr. Circuits Syst.  \textbf{37}(11),  2768--2778 (2018)

\end{thebibliography}
	\appendix
	\newpage

\section{LTL Semantics}
\label{app:sem}
\paragraph{Semantics} Each atomic proposition $p$ in $\AP$ determines a set of states in $\T$ at which $p$ is satisfied. We write $(l,e)\models p$ to denote such satisfaction. Specifically, if $p$ is of the form $\textrm{exp}(\mathbf{x})\geq 0$ then $(l,e)\models p$ if $\textrm{exp}(e)\geq 0$, and if $p$ is of the form $\at(l')$ then $(l,e)\models p$ if $l=l'$. Our LTL semantics are then defined in the usual way. Given an infinite sequence $\run = (l_0,e_0), (l_1,e_1), \dots $ of states, we write $ \run \models \formula$ to denote that it satisfies $\formula$ and have:
\begin{compactitem}
	\item For every $p \in \AP,$ $\run \models p$ iff $(l_0,e_0)\models p$;
	\item $\run \models \X\ \formula$ iff $\run^{1+} \models \formula$;
	\item $\run \models \F\ \formula$ iff $\run^{i+} \models \formula$ for some $i \in \N_0$;
	\item $\run \models \G\ \formula$ iff $\run^{i+} \models \formula$ for all $i \in \N_0$;
	\item $\run \models \formula\ \U\ \psi$ iff there exists $k \in \N_0$ s.t.~$\run^{i+} \models \formula$ for all $0\leq i < k$ and, additionally, $\run^{k+} \models \psi.$
\end{compactitem}

\section{From LTL Formulas to Büchi Automata}
\label{app:LTL2BA}
It is a classical result in model checking that for every LTL formula $\formula$ defined over atomic predicates $\AP$ there exists a non-deterministic Büchi automaton $N$ with alphabet $\powerset{AP}$ which accepts exactly those traces that satisfy $\formula$~\cite{henzinger2018handbook}. Here, $\powerset{\AP}$ is the set of subsets of $\AP.$ 

We now define the product of a transition system $\T$ and an NBW $N$ over the alphabet $\powerset{\AP}.$ Note that at every step of a run of a transition system, we are always at a unique location. Hence, if a letter $\alpha \in \powerset{\AP}$ contains both $\at(l)$ and $\at(l')$ for two different locations $l \neq l',$ then it will never be realized by any run. Thus, without loss of generality, we can remove any transitions with such an $\alpha$. A similar argument applies if $\alpha$ has no atomic proposition of the form $\at(\cdot)$\footnote{If $\delta(q, \alpha) = \emptyset,$ i.e.~if the transition is removed, then any run that is at $q$ and reads the letter $\alpha$ will be a rejecting run. Equivalently, we can create a non-accepting sink state $\perp$ and let $\delta(q, \alpha) = \perp.$ The sink state will only have self-loops, i.e.~$\delta(\perp, \beta) = \perp$ for every $\beta \in \powerset{\AP}.$}.

\paragraph{Product of a Transition System and a Büchi Automaton}
	Let $\T = (\variables, \locations,  \initloc, \precondition, \tstrans)$ be a transition system and $N = (\nbastates,\powerset{\AP}, \nbatrans, \nbainit, \nbaaccept)$ an NBW. The {\em product} of $\T$ and $N$ results in a new transition system $\T \times N := (\variables, \locations', \initloc', \precondition, \leadsto)$ and a set of states $\B^\T_N$ in $\T \times N$ where:
    \begin{compactitem}
        \item $\locations'$ is the Cartesian product $\locations \times \nbastates$.
        \item $\initloc' = (\initloc, q_0)$.
        \item For each transition $\tau=(l,l',\guard_\tau,\update_\tau) \in \,\tstrans$, states $q,q' \in \nbastates$ and $\alpha\in \powerset{\AP}$ such that $q' \in \nbatrans(q,\alpha)$ and $\at(l)$ is the unique proposition of the form $\at(\cdot)$ in $\alpha$, we include a transition $\big((l,q), (l',q'), \guard_\tau \wedge \alpha^{-\at}, \update_\tau \big)$ in $ \leadsto$. Here, $\alpha^{-\at}$ is the conjunction of all atomic propositions in $\alpha$ except $\at(l).$  
        \item $\B^\T_N = \locations \times \nbaaccept \times \R^n$, i.e.~$\B^\T_N$  is the set of states in $\T \times N$ whose NBW state component is contained in $\nbaaccept$.
    \end{compactitem}

\section{Proof of Lemma \ref{reduction}}
\label{proof:lm1}

% Please see Appendix~\ref{app:scheduler} for definition of schedulers.

\subsection{From LTL-RP to EB-PA}
\begin{proof}
	First, suppose that there exists a 
 run $\run = (l_0, e_0), (l_1, e_1), \dots$ in $\T$ that satisfies $\formula$. For each $i$ let $\alpha_i\in \powerset{\AP}$ be the set of all atomic predicates that are true in $(l_i,e_i)$. Then, since $N$ accepts the same language as $\formula$, we know that $\alpha_0,\alpha_1,\dots$ is an infinite word in the alphabet $\powerset{\AP}$ which is accepting in $N$. Let $q_0,q_1,\dots$ be an accepting run in $N$ induced by this infinite word, and consider the following sequence of states in $\T \times N$:
	\[
	\run' = (l_0, q_0, e_0), (l_1, q_1, e_1), \dots
	\]
	We claim that this sequence is a run, i.e.~that $(l_i, q_i, e_i) \leadsto (l_{i+1}, q_{i+1}, e_{i+1})$ for each $i$ where $\leadsto$ is the set of transitions in $\T \times N$. To see this, note that since $\run = (l_0, e_0), (l_1, e_1), \dots$ is a run in $\T$, for each $i$ there exists a transition $\tau=(l_i,l_{i+1},\guard_\tau,\update_\tau)$ under which $(l_{i+1},e_{i+1})$ is a successor of $(l_i,e_i)$. On the other hand, we also know that $q_{i+1}\in \nbatrans(q_i,\alpha_i)$ where $\nbatrans$ is the transition function of $N$. Finally, since $\alpha_i$ is the set of all atomic predicates in $\AP$ that are true at $(l_i,e_i)$, we must have $\at(l_i)\in\alpha_i$. Hence, we have that $(l_i, q_i, e_i) \leadsto (l_{i+1}, q_{i+1}, e_{i+1})$ under transition $((l_i,q_i),(l_{i+1},q_{i+1}),\guard_\tau\land \alpha^{-\at},\update_\tau)$ and the claim follows. Finally, since $q_0,q_1,\dots$ is accepting in $N$ we have that $\run' = (l_0, q_0, e_0), (l_1, q_1, e_1), \dots$ contains infinitely many states in $\B^\T_N$.
 % , Hence, taking a scheduler in $\T \times N$ which induces the run $\run'$ solves the EB-PA problem for $\T \times N$ and $\B^\T_N$. 
	
	For the opposite direction, suppose that there exists a run $\run' = (l_0, q_0, e_0), (l_1, q_1, e_1), \dots$  in $\T \times N$ that visits $\B^\T_N$ infinitely many times. The definition of transitions in the product transition system $\T \times N$ implies that $(l_0, e_0), (l_1, e_1), ...$ is a run in $\T$, that $q_0,q_1,\dots$ is an accepting run in $N$ and that there exists an infinite accepting word $\alpha_0,\alpha_1,\dots$ in $\powerset{AP}$ such that $\at(l_i) \in \alpha_i$ for each $i$. Since $\alpha_0,\alpha_1,\dots$ is accepting in $N$ and $\at(l_i) \in \alpha_i$ for each $i$ and since $N$ and $\formula$ accept the same language, it follows that $(l_0, e_0), (l_1, e_1), ...$ satisfies $\formula$.  \hfill\qed
\end{proof}
\subsection{From LTL-VP to UB-PA} 
\begin{proof}
	First suppose that every run $\run$ of $\T$ satisfies $\formula$. Let $\run' = (l_0,e_0,q_0), (l_1,e_1,q_1), \dots$ be a run in $\T \times D$, then $\eta = (l_0, e_0), (l_1,e_1), \dots$ is a run in $\T$, hence satisfying $\formula$. As $D$ is deterministic, $q_0, q_1, \dots$ is the unique run of $D$ corresponding to $\eta$, so it has to be accepting. This shows that $\run'$ is $\B_D^T$-Büchi. Therefore, every run of $\T \times D$ is $\B_D^T$-Büchi and the answer to the UB-PA problem is positive.
	
	Now, suppose that every run of $\T \times D$ is $\B_D^T$-Büchi. Let $ \pi = (l_0, e_0), (l_1,e_1), \dots$ be a run in $\T$ and let $q_0, q_1, \dots$ be the corresponding run in $D$. From the definition of product transition system, it is followed that $(l_0,e_0, q_0), (l_1,e_1,q_1), \dots$ is a run of $\T \times D$ and therefore $\B_D^\T$-Büchi. This means that $q_0,q_1, \dots$ is an accepting run in $D$, showing that $\pi$ satisfies $\formula$. Therefore, every run of $\T$ satisfies $\formula$ and the answer to the LTL-VP problem is positive.\hfill\qed
\end{proof}

\section{Scheduler Definition}
\label{app:scheduler}
We define scheduler in this section and will use it in the proofs of Appendices~\ref{app:exist-proof}, \ref{appendix-mem-less-angelic}, and \ref{app:universal-proof}.

\paragraph{Schedulers} We resolve non-determinism in transition systems using the standard notion of schedulers. A {\em scheduler} is a function $\scheduler$ that chooses the initial state in $\initstates$ in which the program execution should start, and then for each state specifies the successor state to which the program execution should proceed. Formally, a scheduler is a function  $\scheduler \colon \states^* \to \states$ such that $\scheduler(\Lambda) \in \initstates$ and $s \tstrans \scheduler(\run \cdot s)$ for every sequence of states $\run$ and state $s$. We use $\Sigma$ to denote the set of all schedulers in $\T$.
A scheduler is {\em memory-less}  if its choice of a successor state only depends on the final state in the sequence and does not depend on prior history, i.e.~if $\scheduler(\run \cdot s) = \scheduler(\run' \cdot s)$ for every state $s$ and pair of sequences $\run$ and $\run'$.
A scheduler $\scheduler$ naturally induces a run $\run_\scheduler$ in the transition system, defined as $\run_\scheduler(0) = \scheduler(\Lambda)$ and $\run_\scheduler(i) = \scheduler(\run_\scheduler^i)$ for every $i \in \N$. For a subset $\B \subseteq \states$ of state, a scheduler is $\B-$Büchi if its run is.

\section{Proof of Theorem \ref{thm:ebrf}} \label{app:exist-proof}
Please see Appendix~\ref{app:scheduler} for the definition of schedulers.
\begin{proof}[Soundness]
	Let $e_\init \in \theta_\init$ be such that $f(\initloc,e_\init) \geq 0$ and $(l,e) \in \states$ be any state of $\T$. By definition of EBRFs, there exists $(l',e') \in \states$ such that $(l,e) \tstrans (l',e')$ and $f$ Büchi-ranks $\big((l,e),(l',e')\big)$. We define the memory-less scheduler $\scheduler$ as follows: $\scheduler_f(\Lambda) = (\initloc, e_\init) $ and $\scheduler_f(l,e) = (l',e')$. Next we prove that $\scheduler_f$ is $\B$-Büchi by showing that $\run = \run_{\scheduler_f}$ is $\B$-Büchi. 

	As $f$ is an EBRF, $f(\run(0)) \geq 0$ and it can be inductively proved that $f(\run(i)) \geq 0$ for each $i \in \N$. Also, if $\run(i) \notin \B$, then $f(\run(i+1)) \leq f(\run(i))-1$.
	Suppose, for the sake of contradiction, that $\run$ reaches $\B$ only finite number of times. Then there exists $k \in \N$ such that $\run^{k+}$ does not reach $\B$ at all, meaning that the value of $f$ is non-negative at $\run(k)$ and decreases by 1 infinitely while remaining non-negative. The contradiction shows that $\run$ is $\B$-Büchi. \hfill\qed
\end{proof}

\begin{proof}[Completeness]
	Suppose that there exists a $\B$-Büchi scheduler $\scheduler$ for $\T$. To prove completeness, we first observe that if there exists a $\B$-Büchi scheduler $\scheduler$ for $\T$, then there also exists a memory-less $\B$-Büchi scheduler for $\T$. We defer the proof of this claim to  Appendix~\ref{appendix-mem-less-angelic}. 
	
	Let $\scheduler$ be a memory-less $\B$-Büchi scheduler for $\T$. It gives rise to a $\B$-EBRF $f$ for $\T$ as follows: Let $\run_\scheduler$ be the run in $\T$ induced by $\scheduler$. Since $\scheduler$ is $\B$-Büchi, the run $\run_\scheduler$ must visit states in $\B$ infinitely many times. Thus, for every $i \in \N_0$, let $\dist_{\run_\scheduler}(i)$ be the smallest number $k\in\mathbb{N}_0$ such that $\run_\scheduler(i+k) \in \B$. We define $f$ as follows:
    \[ 
	f(l,e) = \begin{small}\begin{cases}
		-1, &\text{if } (l,e) \notin \run_\scheduler\\
		\dist_{\run_\scheduler}(i), &\text{if } (l,e) = \run_\scheduler(i) \text{ for some }i \in \N_0.
	\end{cases}\end{small}
    \]
	In other words, $f$ is equal to $-1$ in every state that is not contained in the run $\run_\scheduler$, and for every state contained in $\run_\scheduler$ we define $f$ to be equal to the number of subsequent states in $\run_\scheduler$ before the next visit to $\B$. Note that, if for some $i, j \in \N_0$ we have $\run_\scheduler(i) = \run_\scheduler(j)$, given that the scheduler is memory-less, $\dist_{\run_\scheduler}(i) = \dist_{\run_\scheduler}(j)$, so function $f$ is well-defined. One can easily check by inspection that this function satisfies the defining conditions of EBRFs. \hfill \qed
\end{proof}

\section{Sufficiency of Memory-less Schedulers}
Please see Appendix~\ref{app:scheduler} for the definition of schedulers.
\label{appendix-mem-less-angelic}
The scheduler generated in the proof of Theorem \ref{thm:ebrf} is memory-less. We now show that the existence of memory-less schedulers that generate $\B$-Büchi runs is equivalent to the existence of generic (history dependent) schedulers that generate $\B$-Büchi runs. We start by defining the cycle-decomposition of a trajectory. Trajectories are finite sub-sequences of runs and cycles are trajectories starting and ending in the same state. We denote the trajectory containing the first $k$ states in $\run$ by $\run^k$ and write $\run^{k+}$ for the remaining suffix of the run.

\begin{definition}[Cycle-Decomposition of a Trajectory]
	For a trajectory $\tau$ starting at $s_1$ and ending at $s_2$, we define its Cycle-Decomposition as a tuple $(C, \beta)$ where $C$ is the set of all cycles appearing as a sub-trajectory of $\tau$ and $\beta$ is the underlying non-cyclic trajectory in $\tau$ going from $s_1$ to $s_2$.  See Figure \ref{cycle-dec-example} for an example.
\end{definition}

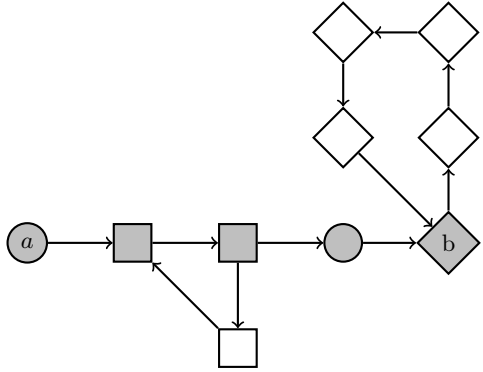
\begin{figure}
	\centering
	\begin{tikzpicture}[ scale=.7, ->]
		\node[draw,fill=lightgray, circle, minimum size=5mm, thick] (a) at (0,2) {$a$};
		\node[draw, fill=lightgray, rectangle, minimum size=5mm, thick] (1) at (2,2) {};
		\node[draw, rectangle, minimum size=5mm, thick] (2) at (4,0) {};
		\node[draw, fill=lightgray, rectangle, minimum size=5mm, thick] (3) at (4,2) {};
		\node[draw, fill=lightgray, circle, minimum size=5mm, thick] (4) at (6,2) {};
		\node[draw, fill=lightgray, diamond, minimum size=8mm, thick] (b) at (8,2) {b};
		\node[draw,diamond, minimum size=8mm, thick] (6) at (8,4) {};
		\node[draw,diamond, minimum size=8mm, thick] (7) at (8,6) {};
		\node[draw,diamond, minimum size=8mm, thick] (8) at (6,6) {};
		\node[draw,diamond, minimum size=8mm, thick] (9) at (6,4) {};
		%\node[draw,circle, minimum size=1cm, thick] (3) at (6,6) {$l_3$};
		%\node[draw,circle, minimum size=1cm, thick] (term) at (2,3) {$l_\term$};
		\draw[thick] (a) to (1);
		\draw[thick] (2) to (1);
		\draw[thick] (3) to (2);
		\draw[thick] (1) to (3);
		\draw[thick] (3) to (4);
		\draw[thick] (4) to (b);
		\draw[thick] (b) to (6);
		\draw[thick] (6) to (7);
		\draw[thick] (7) to (8);
		\draw[thick] (8) to (9);
		\draw[thick] (9) to (b);
		%\draw[thick, bend left = 15] (1) to node[below]{$x_0\geq0$} (2);
		%\draw[thick, bend left = 15] (2) to node[above]{$x_0'=x_0+1$} (1);
		
		%\draw[thick, bend right = 15] (1) to node[below]{$x_0\geq0$} (3);
		%\draw[thick, bend right = 15] (3) to node[above]{$x_0'=x_0-1$} (1);
		
		%\draw[thick] (1) to node[right]{$x_0<0$} (term);
		%\draw[loop below, thick] (term) to (term);
	\end{tikzpicture}
	\caption{An example finite path between two states $a$ to $b$ with its cycle-decomposition $(C, \beta)$, where square and diamond states specify the cycles in $C$ and the gray-scaled states specify $\beta$.}
	\label{cycle-dec-example}
\end{figure}

In order to construct the cycle-decomposition $(C,\beta)$ of $\tau$, we go through $\tau$ and push the states one by one into a stack. Whenever we visit a state $s$ already in the stack, we pop all the elements between the two occurrences of $s$ (including only one of the occurrences) and add them as a cycle to $C$. Finally, $\beta$ will be the trajectory constructed by the states remaining in the stack at the end.

\begin{lemma}
	There exists a \emph{memory-less} $\B$-Büchi scheduler for $\T$ \emph{if and only if} there exists a $\B$-Büchi scheduler for $\T$. 
\end{lemma}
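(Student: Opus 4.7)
My plan is to handle only the nontrivial direction, that an arbitrary $\B$-Büchi scheduler can be converted into a memory-less one; the converse is immediate since memory-less schedulers are a special case. So suppose we are given a $\B$-Büchi scheduler $\scheduler$ inducing the run $\run_\scheduler = s_0, s_1, s_2, \ldots$, and let $R = \{s_0, s_1, s_2, \ldots\}$ denote the set of states visited by $\run_\scheduler$.

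The construction will rely on a ``distance-to-the-next-$\B$-visit'' function rather than directly on the cycle-decomposition machinery of the preceding definition. For each $s \in R$, I would define $\dist_B(s)$ to be the smallest integer $d \geq 1$ such that some index $k$ satisfies $s_k = s$ and $s_{k+d} \in \B$. This quantity is finite because $\run_\scheduler$ visits $\B$ infinitely often, so every suffix of $\run_\scheduler$ contains a $\B$-state. For each such $s$ I would pick a witnessing index $k^\star(s)$ achieving this minimum and define $\scheduler'(s) := s_{k^\star(s)+1}$; on states outside $R$ I would define $\scheduler'$ arbitrarily, subject only to the successor constraint. Setting $\scheduler'(\Lambda) = s_0$ completes the definition of a memory-less scheduler.

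The core of the proof will be a one-step decrement lemma: for every $s \in R$, either $\scheduler'(s) \in \B$ (this occurs precisely when $\dist_B(s) = 1$) or $\dist_B(\scheduler'(s)) \leq \dist_B(s) - 1$. This is immediate from the choice of $k^\star(s)$, because the position $k^\star(s)+1$ in $\run_\scheduler$ serves as an explicit witness that $s_{k^\star(s)+1}$ reaches $\B$ in $\dist_B(s) - 1$ further steps. Crucially, $\scheduler'$ keeps the induced run inside $R$, because $s_{k^\star(s)+1}$ itself belongs to $R$. Iterating the decrement from any starting state of $R$ then forces the run of $\scheduler'$ to hit $\B$ in at most $\dist_B(s_0)$ steps, and restarting the same argument after every $\B$-visit yields infinitely many $\B$-visits, so $\scheduler'$ is $\B$-Büchi.

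I expect the most delicate point to be the strict lower bound $d \geq 1$ in the definition of $\dist_B$. Without it, a $\B$-state would have distance $0$ and the decrement lemma would fail to guarantee genuine progress immediately after entering $\B$. With the lower bound in place, every visited state, including those in $\B$, carries a positive finite distance, and the strict decrease of this $\mathbb{N}$-valued quantity drives a clean inductive proof. An alternative route through the cycle-decomposition formalism, namely by concatenating the simple paths extracted from the $\B$-to-$\B$ segments of $\run_\scheduler$, would also work in principle, but it requires extra care to reconcile conflicting successor choices when different segments share intermediate states, a complication which the distance-function approach avoids entirely.
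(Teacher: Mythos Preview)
Your proof is correct, and it takes a genuinely different route from the paper's. The paper proceeds by cycle-decomposition of the run $\run_\scheduler$ and a three-way case split on whether $\run_\scheduler$ has no cycles, finitely many cycles, or infinitely many cycles; in the last case it further splits on whether some cycle meets $\B$, in which case it builds an explicit lasso $\beta_k \cdot c^\omega$, and otherwise strips all cycles to fall back to the acyclic case. Your argument instead defines a single well-founded potential $\dist_B$ on the visited states (taking the minimum over \emph{all} occurrences of $s$ in $\run_\scheduler$, which is exactly what makes the function well-defined and the one-step decrement hold), and extracts the memory-less successor directly from a witnessing occurrence. This avoids all case analysis and, as you note, also avoids the bookkeeping needed to reconcile conflicting successor choices across different cycle segments. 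The paper's approach has the complementary benefit of exposing the lasso shape of the resulting memory-less run, which is sometimes useful downstream; your approach is closer in spirit to the ranking-function constructions used in the completeness proofs of Theorems~\ref{thm:ebrf} and~\ref{thm:ubrf} and arguably fits the surrounding development more naturally.
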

\begin{proof}
	Necessity is trivial. For sufficiency, suppose $\scheduler$ is a $\B$-Büchi scheduler. We construct a memory-less scheduler $\scheduler'$ based on $\scheduler$. 
	Let $\run = \run_\scheduler$ and let $(C_i, \beta_i)$ be the cycle-decomposition of $\run^i$ for all $i \geq 0$. One of the following cases can happen:
	\begin{enumerate}
		\item $\run$ has no cycles. In this case, $\scheduler$ was not using the history of the run to generate $\run$. Therefore, we can define $\scheduler'$ to be the same as $\scheduler$ on $\run$ and arbitrary on other states. $\run_{\scheduler'}$ will then be the same as $\run_\scheduler$ proving that $\scheduler$ is $\B$-Büchi. 
		\item $\run$ has a finite number of cycles. So, there exists $k$ such that $\run^{k+}$ is non-cyclic which means $\beta_k \cdot \run^{k+}$ is a $\B$-Büchi run with no cycles where we can apply case (i).
		\item $\run$ has an infinite number of cycles. Here we would have two sub-cases:
		\begin{itemize}
			\item $\run$ contains a cycle $c$ starting at $\run(b)$ which intersects with $\B$. Let $k$ be the smallest index where $\run(k) = \run(b)$, then the run $\run' = \beta_k \cdot c^\omega$ is easily seen to be $\B$-Büchi. As each state in $\run'$ has a unique successor in the run, there exists a memory-less scheduler $\scheduler'$ where $\run_{\scheduler'} = \run'$.
			\item None of the cycles of $\run$ intersect with $\B$. Removing all cycles from $\run$ produces a non-cyclic $\B$-Büchi run $\run'$. The rest was handled in case (i).
		\end{itemize}
	\end{enumerate}\hfill
\end{proof}

\section{Soundness and Completeness of Universal Witnesses}
Please see Appendix~\ref{app:scheduler} for the definition of schedulers.
\label{app:universal-proof}
\begin{lemma}
	Every memory-less scheduler of $\T$ is $\B$-Büchi, if and only if every scheduler of $\T$ is $\B$-Büchi
\end{lemma}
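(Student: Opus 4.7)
My plan is to handle the $(\Leftarrow)$ direction trivially, since every memory-less scheduler is in particular a scheduler, and to prove $(\Rightarrow)$ by contrapositive. I will start from a scheduler $\scheduler$ whose induced run $\run_\scheduler$ visits $\B$ only finitely often and construct a memory-less scheduler $\scheduler'$ whose induced run is also not $\B$-Büchi.

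\textbf{Construction.} I will fix $k$ large enough that the suffix $\run_\scheduler^{k+}$ avoids $\B$ entirely, and then build a new infinite run $\run'$ of $\T$ enjoying two properties: (i)~$\run'$ visits $\B$ only finitely often; and (ii)~every state on $\run'$ has a uniquely determined successor along $\run'$. Property (ii) will allow me to define $\scheduler'$ memory-lessly by matching $\run'$ on the states it visits and choosing arbitrary successors elsewhere; then $\run_{\scheduler'} = \run'$, so by (i) the scheduler $\scheduler'$ will fail to be $\B$-Büchi, which is exactly the witness I need.

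\textbf{Case analysis.} To build $\run'$ I will mirror the cycle-decomposition analysis of the previous lemma, applied now to the $\B$-free suffix $\run_\scheduler^{k+}$. If some state $s$ is revisited within $\run_\scheduler^{k+}$, I will extract a simple cycle $c$ at $s$ (which lies outside $\B$) together with a non-cyclic path $\eta$ from the initial state to $c$ obtained by cycle-decomposing the prefix of $\run_\scheduler$ that reaches $s$; then $\run' := \eta \cdot c^{\omega}$, aligned at the earliest state where $\eta$ touches $c$. If instead $\run_\scheduler^{k+}$ consists of pairwise distinct states, I will take $\run' := \beta \cdot \run_\scheduler^{k+}$, where $\beta$ is the cycle-decomposed non-cyclic prefix reaching $\run_\scheduler(k)$, truncated at its first shared state with $\run_\scheduler^{k+}$ to avoid an overlap.

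\textbf{Main obstacle.} The hard part will be securing property (ii). The cycle-decomposed prefix might share states with the later portion of the run, which would force a single state to carry two different successors on $\run'$ and so break memory-lessness. The remedy is precisely the alignment step sketched above: gluing the prefix to the suffix at the first common state ensures that every state on $\run'$ appears at most once, except for the state anchoring the loop in the lollipop case, so that the successor choice depends only on the current state. Finite visitation of $\B$ will then follow automatically, since both the loop of the lollipop and the tail $\run_\scheduler^{k+}$ in the non-cyclic case lie inside the $\B$-avoiding region guaranteed by the choice of $k$.
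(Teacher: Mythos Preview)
Your proposal is correct and follows essentially the same contrapositive-plus-cycle-decomposition strategy as the paper: from a non-$\B$-Büchi run, build either a lollipop or a ray-shaped run with unique successors and read off a memory-less scheduler. Your two-case split (according to whether the $\B$-free suffix contains a repeat) is a minor reorganization of the paper's three cases, and your explicit alignment at the first shared state is in fact more careful than the paper's own argument, which glosses over this overlap issue.
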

\begin{proof}
	Necessity is trivial. For sufficiency, suppose every memory-less scheduler of $\T$ is $\B$-Büchi and let $\scheduler$ be a non-$\B$-Büchi scheduler for $\T$. Let $\run = \run_\scheduler$ and let $(C_i, \beta_i)$ be the cycle-decomposition of $\run^i$ for all $i \geq 0$. One of the following cases happens:
	\begin{enumerate}
		\item $\run$ has no cycles. In this case, $\scheduler$ does not use the history of the run to generate $\run$. Therefore, we can define $\scheduler'$ to be the same as $\scheduler$ on $\run$ and arbitrary on other states. It is then easy to see that $\run_\scheduler = \run_{\scheduler'}$ which means $\scheduler'$ is a non-$\B$-Büchi memory-less scheduler which is a contradiction. 
		\item $\run$ has a finite number of cycles. In this case, there exists $k$ such that $\run^{k+}$ is non-cyclic and does not visit $\B$. Therefore, $\beta_k \cdot \run^{k+}$ is a non-cyclic non-$\B$-Büchi run of $\T$. Similar to the previous case, this yields a contradiction.
		\item $\run$ has an infinite number of cycles. As $\run$ reaches $\B$ only a finite number of times, there must exists a cycle in $\run$ that does not reach $\run$. Suppose $c = [\run(i), \run(i+1), \dots, \run(j)]$ is one such cycle. Then $\beta_{i-1} \cdot c^{\omega}$ is a non-cyclic non-$\B$-Büchi run of $\T$. Similar to the previous cases, this yields a contradiction.
	\end{enumerate}
\end{proof}

\paragraph{Proof of Theorem~\ref{thm:ubrf}} We are now ready to prove the soundness and completeness theorem for UBRFs.

\begin{proof}[Soundness] Let $\T$ be a transition system, $f$ a $\B$-UBRF for $\T$ and $\run$ a run in $\T$. By the definition of UBRFs, $f(\run(i)) \geq 0$ for all $i \geq 0$ and $f(\run(i+1)) \leq f(\run(i))-1$ if $\run(i) \notin \B$. Now suppose for the sake of contradiction that $\run$ is not $\B$-Büchi. Then there exists $k$ such that $\run^{k+}$ does not reach $\B$ at all. So, by taking a transition from $\run(k+i)$ to $\run(k+i+1)$ the value of $f$ decreases by at least 1 while staying non-negative which yields a contradiction, showing that $\run$ is $\B$-Büchi. This proves the soundness of the witness. 
\end{proof}

\begin{proof}[Completeness] For completeness, we must show that whenever every run of $\T$ is $\B$-Büchi, there exists a $\B$-UBRF $f$ for $\T$. For a states $s \in \states$ let $d(s): \states \to \N_0$ be the defined as follows:
	\[
	d(s) = \begin{cases}
		0 & s \in \B \wedge s \text{ is reachable}\\
		\sup\limits_{s \tstrans s'} d(s')+1 & s \notin \B \wedge s \text{ is reachable} \\
		-1 & \text{otherwise}
	\end{cases}
	\label{d_definition} \tag{2}
	\]
	It is trivial that if $d$ is definable, then it is a $\B$-UBRF for $\T$. So, we show that $d$ is a well-defined. $d(s)$ is trivially well-defined in the first and last case of (\ref{d_definition}). Suppose for a moment that the co-domain of $d$ is $\N_0 \cup \{\infty\}$. We show that the $\infty$ case will never happen. For the sake of contradiction assume that there exists a reachable state $s_0 \notin B$ such that $\sup\limits_{s_0 \tstrans s'} d(s')+1 = \infty$. So, there must exist $s_1$ such that $s_0 \tstrans s_1$ and $d(s_1) = \infty$. Continuing the same procedure inductively, there must exist $s_i$ such that $s_{i-1} \tstrans s_i$ and $d(s_i) = \infty$. Hence, if $\alpha$ would be the trajectory ending at $s_0$, the run $\alpha, s_1, s_2, \dots$ is not $\B$-Büchi, which is a contradiction. So, $d$ is well-defined.
\end{proof}

\section{Proof of Theorem \ref{algorithm_results_exist}}
\begin{proof}
	Steps 1 and 2 of the algorithm are clearly sound and complete since they simply encode the EBRF conditions in a specific equivalent format. The same applies to the initial conditions in Step 4. See~\cite{reach20} for details of Step 3 and its soundness and semi-completeness. We are using the technique of~\cite{reach20} as a black box in our Step 3 and hence the same arguments apply in our case. Our algorithm inherits semi-completeness and its dependence on polynomial degrees from~\cite{reach20}.
	
	For any fixed degree of the template polynomials, our algorithm above provides a PTIME reduction from the problem of synthesizing EBRFs/URBFs to Quadratically-constrained Quadratic Programming (QP). It is well-known that QP is solvable in sub-exponential time~\cite{grigor1988solving}. Thus, the complexity of our approach is sub-exponential, too.\hfill\qed
\end{proof}

\section{Running Example of the Algorithm}
\label{app:example}

\paragraph{Fixing Symbolic Templates} Consider the transition system in Figure \ref{fig:2} and suppose $D=2,$ i.e.~the goal of the algorithm is to generate a polynomial EBRF of degree at most 2 for the given transition system. Table~\ref{table:template} shows the template generated for each location. Note that the $c_{l,i}$ variables are treated as unknowns and the goal of the algorithm is to find suitable valuations for them so that they create a $\B$-EBRF.

	\begin{table}[t]
	\begin{center}
		% \vspace{-.5em}
	\begin{tabular}{|c|c|}
		\hline
		Location & Template \\
		\hline 
		$(l_\init,q_0)$ & $c_{\init,0,0} + c_{\init,0,1} \cdot x_0 + c_{\init,0,2} \cdot x_0^2$ \\
		$(l_1,q_0)$ & $c_{1,0,0} + c_{1,0,1} \cdot x_0 + c_{1,0,2} \cdot x_0^2$ \\
		$(l_1, q_1)$ & $c_{1,1,0} + c_{1,1,1} \cdot x_0 + c_{1,1,2} \cdot x_0^2$ \\
		\hline
	\end{tabular}
	\begin{tabular}{|c|c|}
		\hline
		Location & Template \\
		\hline 
				$(l_2,q_0)$ & $c_{2,0,0} + c_{2,0,1} \cdot x_0 + c_{2,0,2} \cdot x_0^2$ \\
		$(l_3,q_0)$ & $c_{3,0,0} + c_{3,0,1} \cdot x_0 + c_{3,0,2} \cdot x_0^2$ \\
		$(l_t, q_0)$ & $c_{t,0,0} + c_{t,0,1} \cdot x_0 + c_{t,0,2} \cdot x_0^2$ \\ \hline
	\end{tabular}
		\caption{The template EBRF generated for the transition system in Figure \ref{fig:2}}
		% \vspace{-3em}
				\label{table:template}
		\end{center}
	\end{table}

\paragraph{Generating Entailment Constraints} We give the entailment constraints for location $(l_2,q_0)$. Entailment constraints from other locations can be derived similarly. \\
	Let $\invariant_{(2,0)} \equiv x_0 \geq 0$ be the invariant for $(l_2,q_0)$. The algorithm symbolically computes the following:
	\begin{equation}
	\label{example2}
	\hspace{-2em}\begin{matrix}
		\big[ x_0 \geq 0 \wedge c_{2,0,0} + c_{2,0,1} \cdot x_0 + c_{2,0,2}  \cdot  x_0^2 \geq 0 \big] \Rightarrow \\
		
		 \big[ c_{1,1,0} + c_{1,1,1}  \cdot (x_0^2+1) + c_{1,1,2} \cdot (x_0^2+1)^2 \geq 0 ~~\wedge  \\
		 c_{1,1,0} + c_{1,1,1}  \cdot  (x_0^2+1) + c_{1,1,2}  \cdot  (x_0^2+1)^2 \leq c_{2,0,0} + c_{2,0,1}  \cdot  x_0 +  \cdot c_{2,0,2}  \cdot  x_0^2-1 \big] 
	\end{matrix}
	\end{equation}
	This is to formulate that whenever a state $s=(l_2,q_0,x_0)$ is reached and $f(s)$ is non-negative, there exists a successor state $s'$ such that $0 \leq f(s') \leq f(s)-1$. As $(l_2,q_0)$ has only one successor in $\T$, this has the same format as $(\ref{const_pair})$.

 \paragraph{Reduce Constraints to Quadratic Inequalities}  The algorithm symbolically computes the following entailments as $\phi_i \Rightarrow \psi_j$ expressions from (\ref{example2}):
	\begin{equation*}
		\begin{split}
			(i)~~x_0 \geq 0 \wedge c_{2,0,0} &+ c_{2,0,1} \cdot x_0 + c_{2,0,2} \cdot x_0^2 \geq 0 \Rightarrow \\ c_{1,1,0} &+ c_{1,1,1} \cdot (x_0^2+1) + c_{1,1,2} \cdot (x_0^2+1)^2 \geq 0\\
			(ii)~x_0 \geq 0 \wedge c_{2,0,0} &+ c_{2,0,1} \cdot x_0 + c_{2,0,2} \cdot x_0^2 \geq 0 \Rightarrow \\
			 c_{1,1,0} &+ c_{1,1,1} \cdot (x_0^2+1) + c_{1,1,2} \cdot (x_0^2+1)^2 \leq \\& c_{2,0,0} + c_{2,0,1} \cdot x_0 + c_{2,0,2} \cdot x_0^2-1 \\
		\end{split}
	\end{equation*}

	We show how the first constraint above is handled. The goal is to write the RHS of (i) as the sum of several non-negative expressions derived from multiplying the LHS expressions by sum-of-squares (SOS) polynomials. The idea is that SOS polynomials are always non-negative and the polynomials on the LHS are also assumed to be non-negative, so if we can combine them to achieve the RHS, then the constraint is proven\footnote{This method is not only sound, but also complete. This is due to Putinar's Positivstellensatz~\cite{putinar1993positive}.  See~\cite{reach20} for details.}. To do this, the algorithm generates three template SOS polynomials. For simplicity, suppose they are of the form $\lambda_{i,0} + \lambda_{i,1} \cdot x_0^2$ for $i=0,1,2$ where all $\lambda_{i,j}$'s are non-negative. Next, it symbolically computes the following equality:
	\begin{small}
	\begin{equation}
		\label{putinared}
		\begin{split}
		& (\lambda_{0, 0} + \lambda_{0,1} \cdot x_0^2)  \\
		+~ & (\lambda_{1, 0} + \lambda_{1,1} \cdot x_0^2) \cdot (x_0)\\
		+~ & (\lambda_{2, 0} + \lambda_{2,1} \cdot x_0^2) \cdot (c_{2,0,0} + c_{2,0,1} \cdot x_0 + c_{2,0,2} \cdot x_0^2)\\
		=~ & c_{1,1,0} + c_{1,1,1} \cdot (x_0^2+1) + c_{1,1,2} \cdot (x_0^2+1)^2
		\end{split}
	\end{equation}
	\end{small}
	The polynomial equality in (\ref{putinared}) has to be satisfied for all values of the program variable $x_0$. This means the monomials $x_0^4, x_0^3, x_0^2, x_0, $ and $1$ should have the same coefficients on both sides. Equating the coefficients, the algorithm obtains the following quadratic constraints which are equivalent to the original constraint:
		\begin{small}\[
		\begin{split}
		\lambda_{0,0} + \lambda_{2,0} \cdot c_{2,0,0} &= c_{1,1,0} + c_{1,1,1} + c_{1,1,2}\\
		\lambda_{1,0} +	\lambda_{2,0} \cdot c_{2,0,1} &= 0\\
		\lambda_{0,1} + \lambda_{2,0}\cdot c_{2,0,2} + \lambda_{2,1} \cdot c_{2,0,0} &= c_{1,1,1} + 2c_{1,1,2} \\
		\lambda_{1,1} + \lambda_{2,1} \cdot c_{2,0,1} &= 0\\
		\lambda_{2,1} \cdot c_{2,0,2} &= c_{1,1,2}
		\end{split}
	\]	\end{small}

\paragraph{Handling Initial Conditions} To ensure existence of $s_\init \in \initstates$ such that the generated EBRF has non-negative value on $s_\init$ (as required by the definition of an EBRF), the algorithm adds $[t_{x_0} \geq 0 \wedge f_{(\init,0)}(t_{x_0}) \geq 0]$ to $\Gamma$.
	
\paragraph{Solving the System} The EBRF shown in the Example of Section~\ref{sec:witness} satisfies the above constraints and corresponds to the following solution to the QP instance $\Gamma:$
	\[
	\begin{small}
	\begin{matrix}
		c_{0,0,0}=3 &~ c_{0,0,1}=1 &~ c_{0,0,2}=0 &~
		c_{1,0,0}=2 &~ c_{1,0,1}=1 &~ c_{1,0,2}=0\\
		c_{2,0,0}=1 &~ c_{2,0,1}=1 &~ c_{2,0,2}=0 &~
		c_{1,1,0}=0 &~ c_{1,1,1}=0 &~ c_{1,1,2}=0\\
		c_{3,0,0}=-1 &~ c_{3,0,1}=0 &~ c_{3,0,2}=0 &~~~
		c_{t,0,0}=-1 &~ c_{t,0,1}=0 &~ c_{t,0,2}=0
		 & t_{x_0} = 0
	\end{matrix}
\end{small}
	\]
	This solution was obtained by passing $\Gamma$ to the Z3 SMT solver.

\section{Experimental results on Non-Linear Benchmarks}
\label{app:non-lin}
\renewcommand{\checkmark}{\textcolor{teal}{\textbf{\ding{52}}}}
\renewcommand{\xmark}{\textcolor{teal}{\textbf{\ding{55}}}}
\newcommand{\failed}{\textcolor{red}{\textbf{F}}}
\begin{table}
	\centering
	\resizebox{!}{6.0cm}{
	\begin{tabular}{|c|c||c|c|c|c||c|c|c|c||c|c|c|c|}
		\hline 
		 & \multirow{2}{*}{Benchmark} & \multicolumn{4}{c||}{Ours} & \multicolumn{4}{c||}{nuXmv} & \multicolumn{4}{c|}{MuVal}\\
		\cline{3-14}
		& & $RA$ & $OV$ & $RC$ & $PR$ & $RA$ & $OV$ & $RC$ & $PR$ & $RA$ & $OV$ & $RC$ & $PR$\\
		\hline 
		\multirow{10}{*}{\rotatebox[origin=c]{90}{TermComp Benchmarks}} & \makecell{\texttt{...2008-aaron12}\\\texttt{true-termination.c}} & \failed & \checkmark & \xmark & \xmark & \failed & \failed & \xmark & \xmark & \xmark & \failed & \xmark & \xmark \\
		\cline{2-14}
		& \texttt{ComplInterv.c} & \xmark & \checkmark & \xmark & \xmark & \failed & \failed & \xmark & \failed & \failed & \checkmark & \xmark & \failed\\
		\cline{2-14}
		& \texttt{DoubleNeg.c} & \checkmark & \checkmark & \xmark & \xmark & \failed & \failed & \xmark & \xmark & \checkmark & \checkmark & \xmark & \xmark\\
		\cline{2-14}
		& \texttt{Factorial.c} & \xmark & \failed & \checkmark & \checkmark & \failed & \checkmark & \failed & \failed & \failed & \checkmark & \checkmark & \checkmark\\
		\cline{2-14}
		& \texttt{LogMult.c} & \checkmark & \checkmark & \checkmark & \checkmark & \failed & \checkmark & \checkmark & \checkmark & \checkmark & \checkmark & \checkmark & \checkmark\\
		\cline{2-14}
		& \texttt{svcomp_ex1.c} & \checkmark & \checkmark & \failed & \failed & \failed & \checkmark & \xmark & \xmark & \checkmark & \checkmark & \xmark & \xmark\\
		\cline{2-14}
		& \texttt{svcomp_ex2.c} & \failed & \checkmark & \checkmark & \checkmark & \failed & \checkmark & \checkmark & \checkmark & \checkmark & \checkmark & \checkmark & \failed\\
		\cline{2-14}
		& \texttt{svcomp_ex3a.c} & \checkmark & \checkmark & \checkmark & \xmark & \failed & \checkmark & \checkmark & \xmark & \checkmark & \checkmark & \checkmark & \xmark\\
		\cline{2-14}
		& \texttt{svcomp_ex3b.c} & \checkmark & \checkmark & \checkmark & \xmark & \failed & \checkmark & \checkmark & \xmark & \checkmark & \checkmark & \checkmark & \xmark\\
		\cline{2-14}
		& \texttt{svcomp_fermat.c} & \checkmark & \checkmark & \checkmark & \checkmark & \failed & \failed & \failed & \failed & \failed & \checkmark & \checkmark & \checkmark\\
		\hline 
		\hline
		\multirow{21}{*}{\rotatebox[origin=c]{90}{SV-Comp Benchmarks}} & \texttt{bresenham-ll.c} & \checkmark & \checkmark & \xmark & \xmark & \failed & \failed & \failed & \failed & \failed & \checkmark & \xmark & \xmark\\
		\cline{2-14}
		& \texttt{cohencu-ll.c} & \checkmark & \checkmark & \xmark & \xmark & \failed & \failed & \xmark & \xmark & \failed & \checkmark & \failed & \failed\\
		\cline{2-14}
		& \texttt{cohendiv-ll.c} & \checkmark & \checkmark & \checkmark & \xmark & \failed & \failed & \checkmark & \xmark & \checkmark & \checkmark & \checkmark & \xmark\\
		\cline{2-14}
		& \texttt{dijkstra-u.c} & \checkmark & \checkmark & \checkmark & \failed & \failed & \failed & \checkmark & \xmark & \failed & \checkmark & \failed & \xmark\\
		\cline{2-14}
		& \texttt{divbin.c} & \failed & \checkmark & \failed & \failed & \failed & \failed & \failed & \failed & \failed & \failed & \failed & \failed\\
		\cline{2-14}
		& \texttt{egcd2-ll.c} & \checkmark & \failed & \failed & \checkmark & \failed & \failed & \xmark & \checkmark & \failed & \failed & \failed & \failed\\
		\cline{2-14}
		& \texttt{egcd3-ll.c} & \checkmark & \failed & \failed & \checkmark & \failed & \failed & \xmark & \checkmark & \failed & \failed & \failed & \failed\\
		\cline{2-14}
		& \texttt{egcd-ll.c} & \xmark & \checkmark & \xmark & \xmark & \failed & \failed & \xmark & \xmark & \failed & \checkmark & \failed & \failed\\
		\cline{2-14}
		& \texttt{geo1-ll.c} & \checkmark & \checkmark & \checkmark & \checkmark & \failed & \failed & \checkmark & \checkmark & \checkmark & \checkmark & \checkmark & \checkmark\\
		\cline{2-14}
		& \texttt{geo2-ll.c} & \checkmark & \checkmark & \checkmark & \checkmark & \failed & \failed & \checkmark & \checkmark & \checkmark & \checkmark & \checkmark & \checkmark\\
		\cline{2-14}
		& \texttt{geo3-ll.c} & \checkmark & \checkmark & \xmark & \xmark & \failed & \failed & \xmark & \xmark & \failed & \failed & \xmark & \xmark\\
		\cline{2-14}
		& \texttt{hard-ll.c} & \checkmark & \failed & \failed & \failed & \failed & \failed & \failed & \failed & \checkmark & \checkmark & \xmark & \xmark\\
		\cline{2-14}
		& \texttt{lcm1.c} & \failed & \failed & \checkmark & \checkmark & \failed & \failed & \checkmark & \checkmark & \failed & \failed & \failed & \failed\\
		\cline{2-14}
		& \texttt{lcm2.c} & \checkmark & \checkmark & \checkmark & \checkmark & \failed & \failed & \checkmark & \checkmark & \checkmark & \checkmark & \checkmark & \failed\\
		\cline{2-14}
		& \texttt{mannadiv.c} & \checkmark & \checkmark & \checkmark & \checkmark & \failed & \failed & \checkmark & \checkmark & \checkmark & \checkmark & \checkmark & \failed\\
		\cline{2-14}
		& \texttt{ps2-ll.c} & \checkmark & \checkmark & \checkmark & \xmark & \failed & \failed & \checkmark & \xmark & \checkmark & \checkmark & \checkmark & \xmark\\
		\cline{2-14}
		& \texttt{ps3-ll.c} & \checkmark & \checkmark & \checkmark & \xmark & \failed & \failed & \checkmark & \xmark & \checkmark & \checkmark & \checkmark & \xmark\\
		\cline{2-14}
		& \texttt{ps4-ll.c} & \checkmark & \checkmark & \checkmark & \xmark & \failed & \failed & \checkmark & \xmark & \checkmark & \checkmark & \checkmark & \xmark\\
		\cline{2-14}
		& \texttt{ps5-ll.c} & \checkmark & \checkmark & \checkmark & \xmark & \failed & \failed & \checkmark & \xmark & \checkmark & \checkmark & \checkmark & \xmark\\
		\cline{2-14}
		& \texttt{ps6-ll.c} & \checkmark & \checkmark & \checkmark & \xmark & \failed & \failed & \checkmark & \xmark & \checkmark & \checkmark & \checkmark & \xmark\\
		\cline{2-14}
		& \texttt{sqrt1-ll.c} & \checkmark & \checkmark & \checkmark & \xmark & \checkmark & \checkmark & \checkmark & \xmark & \checkmark & \checkmark & \checkmark & \xmark\\
		\hline 
		\hline 
		& Total Successful Instances & 27 & 26 & 26 & 27 & 1 & 7 & 26 & 25 & 19 & 25 & 24 & 21 \\ \hline \hline
		& Unique Instances & 8 & 2 & 0 & 1 & 0 & 0 & 2 & 0 & 1 & 0 & 1 & 1 \\ \hline \hline
		& Average Time (s) & 26.0 & 29.2 & 20.2 & 53.4 & 16.6 & 164.8 & 0.2 & 0.3  & 216.5 & 43.9 & 71.3 & 26.5 \\
		\hline 
	\end{tabular}}
	\caption{Experimental results on polynomial benchmarks. $\checkmark$ denotes successful proof of the property and $\xmark$ denotes successful refutation. $\failed$ means the tool failed to decide the LTL specification.}
	% \vspace{-3em}
	\label{experiments_results_poly}
\end{table}

%\section{Comparison of SMT Solvers} \label{app:smt}

%As mentioned in Section~\ref{sec:exper}, in our implementation of our approach, we pass the resulting QP instance to all three SMT solvers in parallel. Table~\ref{experiments_separated} shows the individual performance of different SMT solvers on linear benchmarks. We observe that MathSAT5 outperforms the other two solvers overall. However, none of the solvers outperforms others in every case. This confirms that, much like many other program analyses, LTL verification tools should use several SMT solvers in parallel to achieve the best performance.

%\input{exp_smt_table}

	%\appendix
	%\newpage
	%\input{appendix}
\end{document}